\documentclass[reprint,aps,prx,superscriptaddress,longbibliography]{revtex4-2}

\usepackage{amsmath,amssymb,amsfonts}
\usepackage{amsthm}

\usepackage{graphicx}
\usepackage{physics}
\usepackage{bm}
\usepackage{hyperref}
\usepackage{xcolor}
\usepackage{booktabs}
\usepackage{subcaption}
\usepackage{float}
\usepackage{algpseudocode}  % algorithmicx

\newcounter{algcnt}
\renewcommand{\thealgcnt}{\arabic{algcnt}}
\newcommand{\algcaption}[1]{%
	\refstepcounter{algcnt}%
	\par\vspace{2pt}\noindent
	\begingroup\small\textbf{Algorithm \thealgcnt.} #1\par\endgroup
	\vspace{4pt}\hrule\vspace{5pt}}
\usepackage{mathtools}
\usepackage{tikz}
\usetikzlibrary{arrows.meta,positioning,calc,decorations.pathreplacing,fit,backgrounds,shapes.geometric}
\usepackage{quantikz}
\usepackage{orcidlink}

\InputIfFileExists{numbers}{}{}

\providecommand{\numRho}{0.5549}
\providecommand{\numRhoCiLo}{0.552}
\providecommand{\numRhoCiHi}{0.558}
\providecommand{\numRhoData}{0.5756}
\providecommand{\numResidF}{0.584}
\providecommand{\numD}{0.895}
\providecommand{\numDCiLo}{0.830}
\providecommand{\numDCiHi}{0.952}
\providecommand{\numFfloor}{0.1310}
\providecommand{\numFfloorSd}{0.0037}
\providecommand{\numEmodel}{0.0137}
\providecommand{\numWoneCell}{0.0039}

\providecommand{\numWoneE}{0.0141}
\providecommand{\numWoneEpct}{0.9}
\providecommand{\numWoneEwidth}{10.9}
\providecommand{\numWidthData}{1.030}
\providecommand{\numWidthModel}{1.130}
\providecommand{\numWidthHw}{1.091}
\providecommand{\numPermP}{0.010}
\providecommand{\numNgen}{20\,000}
\providecommand{\numStripT}{0.0188}
\providecommand{\numStripFree}{0.662}
\providecommand{\numStripFreeMean}{0.112}
\providecommand{\numStripFloor}{0.0236}
\providecommand{\numTailQ}{0.05}
\providecommand{\numLamLdata}{0.231}
\providecommand{\numLamLdataCiLo}{0.208}
\providecommand{\numLamLdataCiHi}{0.250}
\providecommand{\numLamUdata}{0.134}
\providecommand{\numLamUdataCiLo}{0.120}
\providecommand{\numLamUdataCiHi}{0.146}
\providecommand{\numLambdaData}{0.097}
\providecommand{\numLamLmodel}{0.174}
\providecommand{\numLamUmodel}{0.114}
\providecommand{\numLambdaModel}{0.060}
\providecommand{\numLamLerr}{0.057}
\providecommand{\numLamUerr}{0.019}
\providecommand{\numEllipMaxDev}{0.024}
\providecommand{\numLamLabM}{0.27}
\providecommand{\numLamLabD}{0.61}
\providecommand{\numLamUabM}{0.33}
\providecommand{\numLamUabD}{0.61}
\providecommand{\numLamLbcM}{0.65}
\providecommand{\numLamLbcD}{0.70}
\providecommand{\numLamUbcM}{0.51}
\providecommand{\numLamUbcD}{0.44}
\providecommand{\numLamLghM}{0.49}
\providecommand{\numLamLghD}{0.70}
\providecommand{\numLamUghM}{0.54}
\providecommand{\numLamUghD}{0.56}
\providecommand{\numGresidF}{0.206}
\providecommand{\numGD}{0.982}
\providecommand{\numGlamLerr}{0.051}
\providecommand{\numGlamUerr}{0.045}
\providecommand{\numIresidF}{4.612}
\providecommand{\numID}{0}
\providecommand{\numIlamLerr}{0.181}
\providecommand{\numIlamUerr}{0.084}
\providecommand{\numHwD}{0.873}
\providecommand{\numHwDCiLo}{0.674}
\providecommand{\numHwDCiHi}{0.964}
\providecommand{\numHwRho}{0.5246}
\providecommand{\numHwRhoCiLo}{0.499}
\providecommand{\numHwRhoCiHi}{0.545}
\providecommand{\numHwFrob}{0.793}
\providecommand{\numHwGerms}{500}
\providecommand{\numHwShots}{1152}
\providecommand{\numHwPulses}{42}
\providecommand{\numHwLayers}{12}
\providecommand{\numHwSeffPred}{537}
\providecommand{\numHwSeffMeas}{1026}

\providecommand{\numHwChi}{2.42}
\providecommand{\numHwAtten}{0.986}
\providecommand{\numHwRhoDeconv}{0.5315}
\providecommand{\numHwRzzErr}{1.4\times10^{-3}}
\providecommand{\numGLambda}{0.001}

\providecommand{\numGrho}{0.5821}
\providecommand{\numHwLamL}{0.189}
\providecommand{\numHwLamU}{0.111}
\providecommand{\numHwLambda}{0.077}
\providecommand{\numHwGain}{0.944}
\providecommand{\numHwGainCiLo}{0.871}
\providecommand{\numHwGainCiHi}{0.993}
\providecommand{\numHwSeff}{\numHwSeffMeas}
\providecommand{\numHwResidF}{\numHwFrob}
\providecommand{\numBellSmax}{2.76}
\providecommand{\numBellFrac}{23\%}
\providecommand{\numBellShot}{2.793}
\providecommand{\numBellShotSe}{0.020}
\providecommand{\numBellSigma}{39}

\graphicspath{{figures/}{figs/}{./}}

\hypersetup{colorlinks=true, linkcolor=blue!60!black, citecolor=blue!60!black, urlcolor=blue!60!black}

\newcommand{\MMD}{\mathrm{MMD}}
\newcommand{\R}{\mathbb{R}}
\newcommand{\cO}{\mathcal{O}}
\newcommand{\cU}{\mathcal{U}}

\newcommand{\cA}{\mathcal{A}}

\newcommand{\Pind}{\Pi}                     
\newcommand{\RZZ}{R_{ZZ}}
\newcommand{\GEANT}{\textsc{Geant4}}
\newcommand{\E}{\mathbb{E}}
\theoremstyle{plain}
\newtheorem{proposition}{Proposition}
\newtheorem{corollary}{Corollary}

\begin{document}
	
	\title{Interferometric Quantum Polynomial Chaos Expansion as a\\
		Generative Model for Calorimeter Shower Simulation}
	
	\author{Jamal Slim\orcidlink{0000-0002-9418-8459}}
	\email{jamal.slim@desy.de}
	\affiliation{Deutsches Elektronen-Synchrotron DESY, 22603 Hamburg, Germany}

	\author{Saverio Monaco\orcidlink{0000-0001-8784-5011}}
	\affiliation{Deutsches Elektronen-Synchrotron DESY, 22603 Hamburg, Germany}
	\affiliation{RWTH Aachen University, 52062 Aachen, Germany}
	
	\author{Florian Rehm\orcidlink{0000-0002-8337-0239}}
	\affiliation{European Organization for Nuclear Research (CERN), 1211 Geneva, Switzerland}
	
	\author{Dirk Kr\"ucker\orcidlink{0000-0003-1610-8844}}
	\affiliation{Deutsches Elektronen-Synchrotron DESY, 22603 Hamburg, Germany}
	
	\author{Frank Gaede \orcidlink{0000-0002-7055-9200}}
	\affiliation{Deutsches Elektronen-Synchrotron DESY, 22603 Hamburg, Germany}
	
	\author{Kerstin Borras\orcidlink{0000-0003-1111-249X}}
	\affiliation{Deutsches Elektronen-Synchrotron DESY, 22603 Hamburg, Germany}
	\affiliation{RWTH Aachen University, 52062 Aachen, Germany}

	\begin{abstract}
		We present the quantum polynomial chaos expansion, a generative algorithm in which a single circuit is the entire model, and we use it to learn calorimeter images. In a classical chaos expansion the randomness is the input and the coefficients are fitted. Here the randomness is still the only input, entering the circuit as rotation angles and re-uploaded at every block, so that each measured observable is a chaos expansion of the latent variables whose order equals the circuit depth, and what is fitted are the gate angles themselves. Expressivity therefore grows with depth rather than with classical coefficients, correlations between outputs arise only from entangling gates, and a single latent wire read by all qubits carries the collective mode of the data. Nothing fitted stands between the circuit and the sample, so switching the entanglers off is a setting of the model itself and provably yields independent outputs, and attribution of the learned correlations to individual gates becomes a measurement. Choosing between two measurement bases shot by shot sharpens attribution into certification, and the trained model violates the Bell bound obeyed by every classical generative model with local response, whatever its size. We train the model on Geant4 shower data, execute the identical circuit on a superconducting processor with its accuracy loss predicted in advance, prove a no-go theorem for the tail dependence of every smooth generator read out through expectation values, and identify the circuit primitive that removes this limit.
	\end{abstract}
	
	\maketitle
	
	% ============================================================================
	\section{Introduction}
	\label{sec:intro}
	
	Calorimeter simulation is one of the heaviest computing loads in high-energy physics~\cite{GEANT4_2003,GEANT4_2006,ATLAS_sim_2010}, and the $\cO(10^{10})$ events per year expected at the High-Luminosity LHC~\cite{HLLHCComputing_2020} have made fast surrogates a practical necessity rather than a curiosity. Classical generative models meet the need. Adversarial networks~\cite{CaloGAN_2017,CaloGAN_2018}, variational autoencoders~\cite{CaloVAE_2018}, normalizing flows~\cite{CaloFlow_2021,CaloFlow_2022} and diffusion models~\cite{CaloDiffusion_2023,CaloScore_2023} all reproduce shower images at $10^4$ to $10^7$ trainable parameters. The appeal of a quantum generator is that a circuit of $n$ qubits carries a distribution over a space of dimension $2^n$ while holding only $\cO(n)$ angles, so the same task might be done with a parameter count smaller by orders of magnitude. Born machines~\cite{BornMachine_2018,BornMachine_2019}, quantum adversarial networks~\cite{QGAN_Zoufal_2019,QGAN_Lloyd_2018} and expectation-value samplers, which turn classical input randomness into continuous outputs through measured observables~\cite{Romero_Aspuru_2021,Barthe_2024}, all pursue this. None has yet delivered a practical benefit at realistic scale, and the reasons are increasingly well documented~\cite{SchuldKilloran_2022,Kubler_2021,Bowles_2024}.
	
	\subsection{The attribution problem}
	
	Among quantum models, calorimeter surrogates that reach the largest problem sizes are hybrid quantum-classical methods, using an annealer to sample a restricted Boltzmann machine and a trained convolutional decoder to reconstruct the image~\cite{CaloQVAE_2024,Calo4pQVAE_2025}. In the largest of them a latent of $2048$ binary units becomes a shower of $6480$ voxels through the decoder~\cite{Calo4pQVAE_2025}. Fully quantum generators, which read each sample directly from measurement, have so far been shown only on heavily downsampled targets, an eight-pixel image from twenty-three trainable angles in the most developed case~\cite{QAG_2024,DualPQC_2021}.
	
	In a hybrid model the classical network carries most of the fitted capacity, and when the composite reproduces a target distribution the agreement belongs to the composite. It says little about the circuit, because an expressive classical map reaches the same target from many different feature sets, quantum or otherwise, and for broad classes of re-uploading circuits one can construct an efficient classical surrogate that reproduces the quantum model's own input-output relation~\cite{Schreiber_2023,Huang_2021}. Systematic benchmarking supports the concern. Removing entanglement from quantum models often leaves their performance unchanged, and among papers claiming to outperform a classical method only $4\%$ report a negative result~\cite{Bowles_2024}. What is missing is not a better score, but a \emph{control}, that is a measurement of what the model achieves with its quantum resource switched off, against which a score can be read. A hybrid model cannot supply one, since switching off the circuit leaves a trained network that still fits the data.
	
	Fully quantum models are free of this difficulty by construction, and Born machines are the established example, since every sample they emit is a projective measurement outcome~\cite{BornMachine_2018,BornMachine_2019,Coyle_2020}. What they return, however, are bitstrings rather than calibrated intensities, which has kept them to low-dimensional or coarsely discretized targets. Calorimeter images have therefore been left with a choice between a hybrid model whose classical part absorbs the attribution and a fully quantum model whose readout cannot express the data. This paper removes that choice.
	
	\subsection{Our approach}
	
	We make the circuit the generator. It produces the per-cell densities and the correlations between cells in a single execution. Each of the $n$ calorimeter cells is assigned one qubit, and the intensity of a cell is read from the expectation value of the Pauli-$Z$ operator on its qubit, shifted and scaled by two stored constants that carry the dimensionless expectation into physical units. Those $2n$ constants are computed once from the training data, are never fitted, and are the only classical numbers the deployed model contains.
	
	The quantum polynomial chaos expansion, QPCE, runs as follows. Draw $n$ private uniform germs, which are the latent random variables of a chaos expansion, together with one shared germ. Execute a single interferometric circuit that re-uploads them at the head of each of its $L$ blocks, so that $L$ is at once the circuit depth and, as we prove, the order of the expansion the circuit realizes. Measure once, and convert the expectation values into intensities with the fixed affine map. At the end of every fit, run the same circuit again with its couplers switched off and the shared germ frozen at its median, and confirm that the generated cells become exactly independent. Run it a third time with the shared germ released, which isolates the common-factor channel the design deliberately provides. One germ vector is one shower, the expectation values form a degree-$L$ chaos expansion of the germs, and both controls are parameter settings of the generator rather than baselines imported from outside (Algorithm~\ref{alg:qpce}).
	
	Two ingredients carry the construction. The first is \emph{germ re-uploading}. Encoding the latent noise a single time caps every expectation value at trigonometric degree one in each germ, whatever the depth, which is far too coarse for a skewed calorimeter marginal. Re-uploading the germ once per block makes each expectation a trigonometric polynomial of exact degree $L$ in that germ (Sec.~\ref{sec:circuit}), which is a chaos expansion whose order is bought with circuit depth. The second is a loss with exact gradients. The energy distance, a kernel measure of the gap between the generated and target distributions that we define in Sec.~\ref{sec:loss}, is minimized by L-BFGS with adjoint gradients costing three state evolutions irrespective of the number of parameters.
	
	\subsection{Summary of results}
	
	The model is quantum in its marginals and in its dependence alike. Every density and every unit of correlation comes from one circuit, read through the fixed affine map from each qubit's Pauli-$Z$ expectation to its cell intensity. The count of fitted classical parameters is exactly zero, and the stored classical count is the $2n$ unit conversions (Sec.~\ref{sec:model}).
	
	Depth is the truncation order. With the germs re-uploaded at each block, the expectation $\langle Z_i\rangle$ of qubit $i$ is a trigonometric polynomial of degree $L$ in every germ, which we verify to machine precision. The model therefore inherits the order parameter of classical polynomial chaos, with exponentially many coefficients constrained to be coherent functions of $\cO(nL)$ angles (Sec.~\ref{sec:circuit}).
	
	Dependence flows through two channels placed there by design, the two-qubit couplers and one shared germ wire that every qubit reads, and the verification protocol separates them (Sec.~\ref{sec:verification}). Training uses a single loss and exact gradients. The energy distance equals twice the squared maximum mean discrepancy ($\MMD$) taken with the parameter-free distance kernel, which we confirm numerically to $1.9\times10^{-14}$, and the adjoint gradients agree with the parameter-shift rule~\cite{ParamShift_Mitarai_2018,ParamShift_Schuld_2019} to $10^{-14}$ at $28\times$ lower cost. Since the frozen-batch objective continues to fall after the model has stopped improving, we validate on fresh germs, stop early, and keep the best-validation iterate. Every number we report is computed on fresh germs against the held-out split (Sec.~\ref{sec:loss}).
	
	The model carries its own controls. Remove the couplers and freeze the shared wire, and the cells are exactly independent whatever the remaining angles (Proposition~\ref{prop:floor}). Measuring dependence throughout by the Spearman rank correlation $\rho_S$ between cell pairs, the deployed checkpoint in this configuration reaches $\max|\rho_S|=\numStripT$ against a sampling ceiling of $\numStripFloor$, while the full model reaches mean $|\rho_S|=\numRho$ and a dependence score $D=\numD$, a score constructed in Sec.~\ref{sec:metrics} that equals one when the generated rank structure matches the data and zero at independence. With the couplers removed but the shared wire released, the common-factor channel alone carries mean $|\rho_S|=\numStripFreeMean$. The attribution is thus a decomposition of the generated dependence into an entangler channel and a shared-latent channel, and we report both (Sec.~\ref{sec:verification}). On hardware the shot-noise contraction of each correlation is predicted analytically before deployment, the readout calibration is measured by dedicated circuits that never touch the model, and by Proposition~\ref{prop:contraction} no per-cell correction, however obtained, can change any dependence statistic (Sec.~\ref{sec:hardware}).
	
	The deployed circuit is designed for the machine that runs it. The natural dense alternative, a ring-plus-skip-$2$ coupler graph, embeds in no present device, and routing it inflates the two-qubit depth by a factor of $6$ over its own chromatic-index floor. We use instead a seven-edge path, native to the heavy-hex lattice with zero SWAPs and edge-coloured to that floor, giving $\numHwPulses$ fractional two-qubit phase pulses, the native $\RZZ$ gate, in $\numHwLayers$ two-qubit layers and about $1\,\mu$s of circuit time (Sec.~\ref{sec:topology}). Executed unchanged on \texttt{ibm\_fez}, it reaches $D=\numHwD$ $[\numHwDCiLo,\numHwDCiHi]$ and mean $|\rho_S|=\numHwRho$, against $\numRho$ in noiseless simulation (Sec.~\ref{sec:hardware}).
	
	We are equally precise about what the construction is. At $n=8$ the model is exactly simulable classically, with light-cone cost $\cO(n\,2^{4L+1})$ for this ansatz class, and we make no hardness claim in either direction (Sec.~\ref{sec:simulability}). Classical reference densities appear only as correctness checks that fix the scale of each metric (Sec.~\ref{sec:honest}). One structural property of the target deserves emphasis, its tail asymmetry $\lambda_L-\lambda_U=\numLambdaData$ at $q=\numTailQ$ with disjoint bootstrap intervals, a quantity every elliptical family fixes at zero identically. The circuit reproduces $\numLambdaModel$ of it at the same quantile, and Proposition~\ref{prop:tailnogo} characterizes the asymptotic tail behaviour of the entire smooth expectation-value class (Sec.~\ref{sec:tails}).
	
	% ============================================================================
	\section{From Classical Polynomial Chaos to a Fully Quantum Expansion}
	\label{sec:analogy}
	% ============================================================================
	
	\subsection{Classical PCE and its separability limitation}
	\label{sec:pce_background}
	
	Polynomial chaos expansion represents a square-integrable random variable as a spectral expansion in orthogonal polynomials of independent standardized germs~\cite{Wiener_1938,Xiu_Karniadakis_2002,Xiu_2010}. Each output is expanded as
	\begin{equation}
		Y_j \approx \sum_{\bm{\alpha}\in\cA} c_{j,\bm{\alpha}}\,\Psi_{\bm{\alpha}}(\bm{U}),
		\label{eq:pce_general}
	\end{equation}
	with a multivariate basis of product form, $\Psi_{\bm{\alpha}}(\bm{U})=\prod_j P_{\alpha_j}(U_j)$. That basis is \emph{separable}, so inter-variable dependence must be built explicitly from cross terms $U_iU_j$, $U_iU_jU_k$, $\dots$, and the dictionary grows as $\binom{n+p}{p}$ at degree $p$.
	
	\subsection{Two distinct quantum analogues}
	\label{sec:two_analogues}
	
	A quantum analogue of Eq.~\eqref{eq:pce_general} can be built in two ways, and they differ in where the fitted numbers live.
	
	\emph{Quantum basis, classical coefficients.} The polynomial basis $\Psi_{\bm{\alpha}}$ is replaced by Pauli expectations $\langle O_k\rangle(\bm{U})$ of a data-encoded state, while the coefficients $c_{j,\bm{\alpha}}$ remain a classical linear solve. This is the standard hybrid construction, in which the quantum device serves as a feature map and a trained classical network produces the output. It is also where the attribution problem described in Sec.~\ref{sec:intro} originates.
	
	\emph{Relation to prior work.} Expectation-value readout as a generative mechanism was proposed by Romero and Aspuru-Guzik~\cite{Romero_Aspuru_2021}, and Barthe \emph{et al.}~\cite{Barthe_2024} analyzed the resulting model class in generality, proving universality and resource bounds and establishing that re-uploading circuits with integer generator spectra emit variables with an exact finite chaos expansion. Quantum-circuit encodings of classical chaos surrogates for function approximation appear in Ref.~\cite{Aftab_Schwab_2025} with a complementary aim. Barthe and coworkers separate two readout mechanisms, Born-rule sampling, in which the randomness of a sample is the measurement itself, and expectation-value sampling, in which classical randomness enters the circuit and the output is a measured expectation~\cite{Barthe_2024}. QPCE belongs to the second class, and this is not incidental. A chaos expansion is by definition a map from prescribed input randomness to an output, so a quantum chaos expansion must take its randomness as input, which is exactly what makes depth able to act as the expansion order. The present construction differs from all of these works in where the randomness lives and in what is fitted. In a classical chaos expansion the germs are given and the coefficients are solved for. In a hybrid model the data are encoded into the circuit and a classical network is fitted after it. QPCE inverts both at once. Noise is the only input the circuit ever receives, injected on dedicated wires so that depth becomes the expansion order, the data touch the model only through the loss, and the fitted objects are the gate angles, so one circuit is simultaneously the basis, the coefficients and the sampler. The latent structure itself becomes an architectural resource. Private wires carry cell-local randomness and one shared wire, read by every qubit with trainable weight, realizes the common factor of the data, a design freedom that classical expansions do not possess because their germs carry no topology. Finally, where the prior analyses establish existence, this paper builds the object and holds it to account. Depth equals order with measured residuals, the verification null is a parameter setting of the trained generator, the hardware run carries an error budget predicted before submission, and a no-go theorem marks the boundary of the entire smooth expectation-readout class, proved against our own model family.
	
	\emph{QPCE, quantum basis \emph{and} quantum coefficients.} The expansion is inverted. Rather than expanding $Y$ in quantum features and fitting classical weights, the circuit generates the showers directly and the \emph{gate angles} are fitted to the data. There are no coefficients to solve for because there is no expansion to enumerate. The quantum state is the model, and the measured expectation values are already the observable. Table~\ref{tab:analogy} makes the correspondence explicit.
	
	\begin{table}[t]
		\centering
		\caption{Structural correspondence between classical PCE and QPCE. The entries in bold are where QPCE departs both from classical PCE and from the hybrid construction introduced in Sec.~\ref{sec:two_analogues}.}
		\label{tab:analogy}
		\renewcommand{\arraystretch}{1.3}
		\begin{tabular}{p{3.5cm} p{4.1cm}}
			\toprule
			\textbf{Classical PCE} & \textbf{QPCE} \\
			\midrule
			Data embedding of $\bm U$ into the basis & \textbf{eliminated}, the circuit receives \textbf{no data} \\
			Germ / input randomness & \textbf{in-circuit noise wires} $R_y(\pi\varepsilon_i)$, $\varepsilon\!\sim\!\cU(0,1)$ \\
			Polynomial basis $\Psi_{\bm{\alpha}}(\bm U)$ & \textbf{the state itself}, no basis is enumerated \\
			Legendre orthogonality & Fourier-cosine orthogonality on $(0,1)^n$ \\
			Separable product basis & \textbf{entangled state}, dependence from $\RZZ$ couplers and one shared germ wire \\
			Degree-$p$ truncation & interferometer depth $L$ \\
			Coefficients $c_{j,\bm{\alpha}}$ (fitted) & \textbf{gate angles $\bm\beta,\bm\phi,\bm\theta$ (fitted)} \\
			Ridge solve & \textbf{L-BFGS on the energy distance} \\
			Reconstruction $\sum c\Psi$ & \textbf{linear readout} $Y_i=m_i+s_i\langle Z_i\rangle$ \\
			\bottomrule
		\end{tabular}
	\end{table}
	
	The consequence for the separability limitation discussed in Sec.~\ref{sec:pce_background} is direct. Classical PCE reaches dependence by enumerating cross terms. QPCE reaches it by turning on a two-qubit phase and, for the global mode of the data, by feeding one extra germ to every qubit. The coupler channel carries $L|E|=42$ phases here and, crucially for what follows, setting them to zero and freezing the shared wire returns the model \emph{exactly} to independence rather than to a poorly fitted approximation of it. That exactness is what makes verification possible, and it is the reason we use continuous $\RZZ$ couplers rather than CNOTs (Sec.~\ref{sec:whyrzz}).
	
	\subsection{Generation without data embedding}
	\label{sec:scope}
	
	A hybrid construction encodes observations into rotation angles, which makes the circuit a data-dependent feature map whose output must then be decoded. In QPCE that embedding is absent. The circuit input is $\bm\varepsilon\sim\cU(0,1)^n$, i.i.d.\ noise drawn afresh for every generated sample, and nothing else. The state $\ket{\psi(\bm\varepsilon)}$ is conditioned on no observation. Training data touch the model at exactly one point, the loss $\mathcal E(\hat{\bm Y},\bm Y)$, which is evaluated after measurement and never inside a quantum register. With no data in the circuit, any structure in its output, marginal or cross-cell, must have been generated by gates, and in Sec.~\ref{sec:verification} we verify this for the dependence.
	
	% ============================================================================
	\section{The QPCE Model}
	\label{sec:model}
	% ============================================================================
	
	\subsection{The algorithm}
	\label{sec:algorithm}
	
	QPCE consists of three procedures, \textsc{Train}, \textsc{Generate} and \textsc{Verify}, sharing one circuit and one parameter vector. Verification is the third procedure rather than an appendix because it runs at the end of every fit. Algorithm~\ref{alg:qpce} states all three completely.
	
	\begin{table}[t]
		\algcaption{The complete QPCE algorithm. No step contains a fitted
			classical parameter. \textsc{Generate} is the entire deployed model,
			and \textsc{Verify} reuses the generation circuit at two parameter
			settings, so it runs on any backend the model runs on.}%
		\label{alg:qpce}
		\begin{algorithmic}[1]
			\Require training showers $\bm Y\in\R^{N\times n}$; blocks $L$;
			edge set $E$; shots $S$
			\Ensure gate angles $(\bm\beta,\bm\phi,\bm\theta,\bm a,\bm w)$;
			readout constants $(\bm m,\bm s)$
			\Statex
			\Procedure{Train}{$\bm Y$}\Comment{run once}
			\State $m_i\gets\tfrac12\bigl(\max_b Y_{bi}+\min_b Y_{bi}\bigr)$,\quad
			$s_i\gets\tfrac12\bigl(\max_b Y_{bi}-\min_b Y_{bi}\bigr)$
			\Statex \hfill\emph{stored once, never revisited}
			\State freeze a germ batch $\{(\bm\varepsilon_b,\varepsilon_{0,b})\}_{b=1}^{B}\!\sim\!\cU(0,1)^{n+1}$
			and a shower batch $\{\bm Y_b\}$
			\State $(\bm\beta,\bm\phi,\bm\theta,\bm a,\bm w)\gets\arg\min\;
			\mathcal{E}\bigl(\bm m+\bm s\odot\langle\bm Z\rangle,\;\bm Y\bigr)=2\,\MMD^2$
			\Statex \hfill\emph{L-BFGS, exact adjoint gradients; validated on
				fresh germs, early-stopped, best-validation iterate saved}
			\EndProcedure
			\Statex
			\Procedure{Generate}{\,}\Comment{one germ vector $\to$ one shower}
			\State draw $(\bm\varepsilon,\varepsilon_0)\sim\cU(0,1)^{n+1}$
			\State run the circuit of Eq.~\eqref{eq:circuit}, re-uploading
			$(\bm\varepsilon,\varepsilon_0)$ in each of the $L$ blocks
			\State measure all qubits; estimate $\langle Z_i\rangle$
			\Statex \hfill\emph{$S$ shots on hardware, exact in simulation}
			\State \Return $Y_i=m_i+s_i\langle Z_i\rangle$
			\Comment{degree-$L$ chaos expansion, Eq.~\eqref{eq:chaos}}
			\EndProcedure
			\Statex
			\Procedure{Verify}{\,}\Comment{automatic at the end of every fit}
			\State regenerate with $\phi_{\ell,ij}\gets0$ for all $\ell,(i,j)\in E$
			\emph{and} $\varepsilon_0\gets\tfrac12$ frozen; all other angles untouched
			\State \textbf{assert} $\max_{i\neq j}\lvert\mathrm{corr}(u_i,u_j)\rvert$
			lies at the sampling floor \Comment{Proposition~\ref{prop:floor}}
			\State regenerate with $\phi\gets0$, $\varepsilon_0$ \emph{free};
			record the shared-latent channel alone
			\State \Return the effect size $D$ of Eq.~\eqref{eq:effectsize}
			and both strip statistics
			\EndProcedure
		\end{algorithmic}
	\end{table}
	
	The three procedures share one circuit and one parameter vector. \textsc{Generate} is the deployed model in its entirety, no transform precedes the circuit and none follows it, so the parameter audit of the algorithm is in fact the parameter audit of the circuit. Its output is a chaos expansion of degree $L$ because step 2 re-uploads the germs in every block, which is the precise sense in which depth is the truncation order. \textsc{Verify} is not a separately trained baseline but the same circuit evaluated at $\bm\phi=0$, a setting that forces exact independence conditional on the shared germ by Proposition~\ref{prop:floor}. Freezing $\varepsilon_0$ realizes the conditioning operationally, and releasing it isolates the shared-latent channel. The null model is a limit point of the trained family rather than an external comparison. Figure~\ref{fig:architecture} draws the pipeline these procedures execute, and we report in Sec.~\ref{sec:strip} we report the measured decomposition they produce.

	\subsection{Pipeline}
	\label{sec:pipeline}
	
	Figure~\ref{fig:architecture} draws the algorithm as it executes. Training data $\bm{Y}\in\R^{N\times n}$ enter only through the loss. At generation time $n$ independent uniform germs $\bm\varepsilon$ are drawn, the circuit is executed, and the shower is read out linearly, $Y_i=m_i+s_i\langle Z_i\rangle$, with $(m_i,s_i)$ fixed unit-conversion constants set once from the training range and never fitted. No transform precedes the circuit, none follows it, and no training datum is resampled at generation. Generation is fully unconditional.

	\begin{figure*}[t]
		\centering
		\resizebox{0.80\textwidth}{!}{%
			\begin{tikzpicture}[
				font=\small,
				q/.style={draw, rounded corners=2pt, fill=orange!16, minimum height=1.75cm,
					minimum width=2.60cm, font=\scriptsize, align=center},
				m/.style={draw, rounded corners=2pt, fill=blue!10, minimum height=1.75cm,
					minimum width=1.95cm, font=\scriptsize, align=center},
				r/.style={draw, rounded corners=2pt, fill=red!8, minimum height=1.75cm,
					minimum width=2.35cm, font=\scriptsize, align=center},
				d/.style={draw, rounded corners=2pt, fill=gray!10, minimum height=1.75cm,
					minimum width=2.35cm, font=\scriptsize, align=center},
				r2/.style={draw, rounded corners=2pt, fill=red!8, minimum height=1.85cm,
					minimum width=3.05cm, font=\scriptsize, align=center},
				qw/.style={-{Stealth[length=2mm]}, thick},
				cw/.style={-{Stealth[length=2mm]}, thick, double, double distance=1pt},
				fb/.style={-{Stealth[length=2mm]}, thick, blue!70},
				note/.style={font=\scriptsize, text=gray!60!black, align=center},
				lab/.style={font=\scriptsize, align=center},
				]
				\newcommand{\shower}[3]{%
					\begin{scope}[shift={(#1,#2)}]
						\foreach \v [count=\i from 0] in {#3}
						{\fill[blue!\v!white, draw=gray!45, line width=0.2pt]
							({\i*0.125},0) rectangle ++(0.118,0.22);}
				\end{scope}}
				\newcommand{\histicon}[4]{%
					\begin{scope}[shift={(#1,#2)}]
						\foreach \h [count=\i from 0] in {#4}
						{\fill[#3] ({\i*0.086},0) rectangle ++(0.068,{\h*0.40});}
						\draw[gray!60, line width=0.25pt] (-0.02,0) -- (0.78,0);
				\end{scope}}
				\newcommand{\blochicon}[3]{% x y radius
					\begin{scope}[shift={(#1,#2)}]
						\draw[gray!45,line width=0.3pt] (0,0) circle (#3);
						\draw[gray!65,line width=0.3pt] (0,-#3-0.06) -- (0,#3+0.10);
						\draw[gray!45,line width=0.25pt] (-#3-0.04,0) -- (#3+0.04,0);
						\node[font=\tiny,text=gray!60!black,anchor=south] at (0.02,{#3+0.08}) {$Z$};
						\draw[blue!55!black,densely dashed,line width=0.35pt]
						({sin(42)*#3},{cos(42)*#3}) -- (0,{cos(42)*#3});
						\fill[blue!65!black] (0,{cos(42)*#3}) circle (0.035);
						\node[font=\tiny,text=blue!55!black,anchor=west] at ({0.10},{cos(42)*#3+0.13}) {$\langle Z_i\rangle$};
						\draw[orange!85!black,line width=0.8pt,-{Stealth[length=1.5mm]}]
						(0,0) -- ({sin(42)*#3},{cos(42)*#3});
				\end{scope}}
				\newcommand{\ringiconoff}[3]{%
					\begin{scope}[shift={(#1,#2)}]
						\foreach \a in {0,45,...,315}{\fill[white,draw=gray!70,line width=0.3pt] (\a:#3) circle (0.05);}
				\end{scope}}
				\newcommand{\ringicon}[3]{%
					\begin{scope}[shift={(#1,#2)}]
						\foreach \a in {0,45,...,315}{\coordinate (r\a) at (\a:#3);}
						% seven-edge path drawn on the ring layout (ring left open: no 315--0 edge)
						\foreach \a/\b in {0/45,45/90,90/135,135/180,180/225,225/270,270/315}
						{\draw[blue!65!black, line width=0.5pt] (r\a) -- (r\b);}
						% shared germ wire: dashed chords from the central hub to every qubit
						\foreach \a in {0,45,...,315}
						{\draw[red!60!black, line width=0.25pt, densely dashed] (0,0) -- (r\a);}
						\fill[red!25,draw=red!60!black,line width=0.3pt] (0,0) circle (0.055);
						\node[font=\tiny,text=red!60!black,anchor=south west] at (0.05,0.5) {};
						\foreach \a in {0,45,...,315}{\fill[white,draw=gray!70,line width=0.3pt] (r\a) circle (0.05);}
				\end{scope}}
				
				% ---------------- row 1 (the deployed model, left to right) ----------------
				\node[d] (eps)  at (0,0)      {};
				\node[q] (circ) at (3.05,0)   {};
				\node[m] (meas) at (6.15,0)   {};
				\node[r] (ro)   at (9.40,0)   {};
				\node[d] (gen)  at (12.40,0)  {};
				
				\node[lab,anchor=north] at ($(eps.north)+(0,-0.05)$)  {noise germs};
				\node[lab,anchor=north] at ($(circ.north)+(0,-0.05)$) {interferometer};
				\node[lab,anchor=north] at ($(meas.north)+(0,-0.05)$) {measure $Z$};
				\node[lab,anchor=north] at ($(ro.north)+(0,-0.05)$)   {linear readout};
				\node[lab,anchor=north] at ($(gen.north)+(0,-0.05)$)  {generated showers};
				
				\node[note] at (0,-0.30) {$(\varepsilon,\varepsilon_0)\!\sim\!\cU(0,1)^{n+1}$};
				\node[note,anchor=north,text width=2.90cm] at ($(eps.south)+(0,-0.05)$)
				{re-uploaded each block\\$\Rightarrow$ chaos order $=$ depth $L$};
				\ringicon{3.05}{-0.30}{0.44}
				\histicon{5.76}{-0.72}{blue!55}{0.45,1.0,0.18,0.72,0.30,0.10,0.55,0.22}
				\blochicon{9.48}{-0.42}{0.40}
				\node[note,anchor=north,text width=2.90cm] at ($(ro.south)+(0,-0.05)$)
				{$Y_i=m_i+s_i\langle Z_i\rangle$;\\$2n$ stored constants, none fitted};
				\shower{11.87}{-0.14}{6,38,78,100,94,70,41,19}
				\shower{11.87}{-0.44}{3,19,41,71,98,100,65,27}
				\shower{11.87}{-0.74}{9,25,73,88,100,68,27,15}
				
				\draw[qw] (eps)  -- (circ);
				\draw[qw] (circ) -- node[above,note,pos=0.56]{$\ket{\psi(\varepsilon)}$} (meas);
				\draw[cw] (meas) -- node[above,note]{$\langle Z_i\rangle$} (ro);
				\draw[cw] (ro)   -- (gen);

				% ---------------- row 2 (training only) ----------------
				\node[d]  (data) at (6.15,-3.45) {};
				\node[r2] (loss) at (9.40,-3.45) {};
				\node[lab,anchor=north] at ($(data.north)+(0,-0.05)$) {\GEANT\ showers};
				\node[lab,anchor=north] at ($(loss.north)+(0,-0.05)$) {};
				\node[note] at (9.40,-3.62) {$\mathcal{E}(\hat{\bm Y},\bm Y)=2\,\MMD^2$,\\distance kernel};
				
				\shower{5.62}{-3.59}{6,38,78,100,94,70,41,19}
				\shower{5.62}{-3.89}{3,19,41,71,98,100,65,27}
				\shower{5.62}{-4.19}{6,33,76,100,94,63,29,12}
				\node[note,anchor=north] at ($(data.south)+(0,-0.05)$) {$\bm Y\in\R^{N\times n}$, training only};
				
				\draw[cw] (data) -- (loss);
				\draw[cw, rounded corners=6pt] (gen.south) -- (12.40,-3.45)
				node[pos=0.35, right, note]{$\hat{\bm Y}$} -- (loss.east);
				
				\draw[fb, rounded corners=6pt]
				(loss.south) -- ++(0,-0.85) -- (3.05,-5.175) -- (circ.south);
				\node[note,anchor=north west,text=blue!70,text width=3.60cm] at (-0.4,-2.95)
				{exact adjoint gradients,\\updating $\bm\beta,\bm\phi,\bm\theta,\bm a,\bm w$\\(training only)};
			\end{tikzpicture}%
		}
		\caption{QPCE drawn as it executes. The upper row is the entire deployed model. Germs enter the interferometer, one measurement setting yields $\langle Z_i\rangle$, and the fixed affine readout returns a shower. Grey marks data, orange the quantum circuit, blue the measurement and red the unit conversion. The interferometer icon is the actual coupler graph drawn on a ring layout, in which the solid edges are the seven-edge path (the ring is left open, $|E|=7$, native to the heavy-hex lattice with zero SWAPs), each edge carrying one trainable phase per block, and the dashed chords are the one shared germ wire $\varepsilon_0$, read by every qubit with a trainable amplitude at zero two-qubit cost. Re-uploading the germs at the head of every block makes the depth $L$ the chaos order (Eq.~\eqref{eq:chaos}). The lower row runs during training only, where \GEANT\ data reach the model through a single object, the parameter-free energy distance $\mathcal{E}=2\,\MMD^2$, whose exact adjoint gradient (blue) updates the $154$ gate angles. With the couplers off and $\varepsilon_0$ frozen, the graph reduces to bare vertices and the cells are exactly independent for any remaining angles (Proposition~\ref{prop:floor}), with the couplers off and $\varepsilon_0$ free, the fan alone remains, which is the shared-latent channel that \textsc{Verify} reports separately.}
		\label{fig:architecture}
	\end{figure*}
	
	\subsection{The interferometric ansatz}
	\label{sec:circuit}
	
	For $n$ qubits and $L$ blocks on an edge set $E$, the circuit is
	\begin{equation}
		\ket{\psi(\bm\varepsilon,\varepsilon_0)} \;=\; \prod_{\ell=1}^{L} W_\ell\, D_\ell\, G(\bm\varepsilon,\varepsilon_0)\,\ket{0^n},
		\label{eq:circuit}
	\end{equation}
	\begin{equation}
		G(\bm\varepsilon,\varepsilon_0)=\bigotimes_i R_y\!\bigl(\pi(w_i\varepsilon_i + a_i\varepsilon_0)\bigr),
		\label{eq:germ}
	\end{equation}
	\begin{equation}
		W_\ell=\bigotimes_i R_y(\theta_{\ell,i}),
		\qquad
		D_\ell=\!\!\prod_{(i,j)\in E}\!\!\RZZ(\phi_{\ell,ij})\prod_i R_z(\beta_{\ell,i}),
		\label{eq:blocks}
	\end{equation}
	read right to left within each block and followed by a single computational-basis measurement (Fig.~\ref{fig:circuit}). Here $\bm\varepsilon\in(0,1)^n$ are the i.i.d.\ private germs and $\varepsilon_0\in(0,1)$ is one additional i.i.d.\ germ \emph{shared by every qubit}, all \emph{re-uploaded by $G$ at the start of every block}. The block $D_\ell$ is diagonal, $\RZZ(\phi)=\exp(-i\phi Z_iZ_j/2)$, and $\{\bm\beta,\bm\phi,\bm\theta,\bm a,\bm w\}$ are the trainable angles ($154$ for the deployed instance, $n=8$, $L=6$, $|E|=7$). Every fitted number, including the germ amplitudes $a_i$ and $w_i$, is a rotation-angle coefficient on the circuit, so the classical trainable count remains zero. All $\RZZ$ in a block commute, so their ordering is immaterial, a fact exploited by the edge-coloured schedule constructed in Sec.~\ref{sec:topology}.
	
	The ansatz departs from the CNOT-plus-rotations layer that a hardware-efficient template would supply by default, and we motivate each departure in turn.

	\subsubsection{Germ re-uploading sets the chaos order}
	\label{sec:reupload}
	The decisive structural element is that the germ rotation opens \emph{every} block rather than only the first. By the Fourier structure of variational circuits~\cite{Schuld_2021_fourier}, an observable of a circuit that encodes a germ through $r$ rotations is a trigonometric polynomial with at most $r$ harmonics in that germ. Encoded once, every $\langle Z_i\rangle$ is degree one at \emph{any} depth (measured, fitting a single harmonic leaves relative residual $8\times10^{-16}$ at $L=2$ and $4\times10^{-15}$ at $L=6$), and a degree-one chaos expansion cannot represent a skewed calorimeter marginal. Re-uploaded $L$ times,
	\begin{equation}
		\langle Z_i\rangle(\bm\varepsilon,\varepsilon_0)\;=\;\sum_{\max_j|\omega_j|\le L} c_{\bm\omega}\,e^{i\pi\bigl(\sum_j \omega_j w_j\varepsilon_j\,+\,\omega_0 a\varepsilon_0\bigr)},
		\label{eq:chaos}
	\end{equation}
	a trigonometric chaos expansion of \emph{exact} degree $L$ in each germ wire, with the coefficients $c_{\bm\omega}(\bm\beta,\bm\phi,\bm\theta,\bm a,\bm w)$ coherent functions of the angles and the base frequencies $\pi w_j$, $\pi a_j$ set by the trainable amplitudes (measured on the unit-amplitude family, degree-$L$ fit residual $\le 2\times10^{-15}$ and degree-$(L{-}1)$ residual $\cO(1)$ for $L=2,3,4$). The Fourier-cosine system is complete and orthogonal on $(0,1)^{n+1}$ and, at unit amplitudes, spans the same space as tensor Legendre polynomials of the same degree, so $L$ plays exactly the role of the PCE order $p$, with two differences from classical practice. The truncation is per-coordinate rather than total-degree, and the exponentially many coefficients $c_{\bm\omega}$ are not free but coherent functions of $\cO(nL)$ angles. That constraint is the quantum content of the construction.
	
	\subsubsection{Noise amplitudes}
	Using a trainable encoding scale is \textit{risky} under any dependence-sensitive objective. For instance, shrinking a private noise amplitude increases output correlation monotonically, so a correlation-only loss would drive it to zero and produce near-perfect correlations by collapsing the generative diversity. In the private-only family we therefore fix the encoding scale to unity. With a shared wire the situation changes, the quantity the data pin is the shared \emph{variance fraction} of each cell (Sec.~\ref{sec:sharedgerm}), and that fraction is exactly the ratio that the amplitudes $w_i$ and $a_i$ control, so both are trainable. The collapse attractor is neutralized not by a projection but by the objective itself so that the loss is the full energy distance rather than a correlation score, and latent collapse widens the marginal mismatch and raises $\mathcal E$ directly. Measured on the converged deployed model no collapse occurs, the per-cell $W_1$ is $\numWoneCell$, and the fitted private amplitudes remain $\cO(0.3)$ to $\cO(0.6)$ on seven of the eight cells.
	
	\subsubsection{Continuous couplers rather than CNOTs}
	\label{sec:whyrzz}
	A CNOT is maximally entangling and has no dial, so a model family built on it contains no controlled null. $\RZZ(\phi)$ is a continuous entangler with $\phi=0$ giving \emph{exact} separability, conditional on the shared germ, and correlation growing smoothly with $|\phi|$. This is not merely convenient for optimization, it is actually what makes Proposition~\ref{prop:floor} an exact statement rather than an approximation, and hence what makes the verification protocol possible at all.
	
	\subsubsection{Circuit class and simulability}
	\label{sec:simulability}
	The circuit class of the ansatz determines which complexity results do and do not pertain to it, so we state it exactly and verify it directly. An IQP circuit is a single block of gates diagonal in one basis conjugated by fixed mixing layers, $U=H^{\otimes n}D\,H^{\otimes n}$, equivalently a circuit whose gates all commute. The ansatz of Eq.~\eqref{eq:circuit} is not of this form. The wall $R_y(\theta_{\ell,i})$ of block $\ell$ acts immediately before the germ rotation of block $\ell{+}1$ on the same qubit and the two compose on a single axis, $R_y(\theta)\,R_y(\pi\varepsilon)=R_y(\theta+\pi\varepsilon)$, so the circuit is a sequence of diagonal blocks separated by \emph{germ-dependent} $y$-rotation layers, a data-re-uploading brickwork. For the block structure of Eq.~\eqref{eq:circuit} at random angle settings, the conjugated operator $H^{\otimes n}U H^{\otimes n}$ carries off-diagonal weight of order one ($0.90$ for the deployed form at $L=2$, and $0.40$ even with the walls pinned at $\pi/2$, and $0.000$ for a true IQP control), and the germ and coupler layers do not commute.
	
	The simulability of the model is likewise stated exactly. The observable $\langle Z_i\rangle$ depends only on the causal cone of qubit $i$, which on a bounded-degree coupler graph grows linearly with depth, giving a light-cone simulation cost $\cO(n\,2^{4L+1})$, i.e., polynomial in $n$ at fixed $L$, exponential in $L$. At the operating point $n=8$, $L=6$ the cone covers the register and the model lives in a $2^8$-dimensional statevector. This simulability is a feature of the construction as a generative model, exact training gradients, exact verification, and byte-level agreement between the deployed circuit and its reference are all consequences of it, and every resource of the construction scales linearly, $(2n+|E|)L+2n=\cO(nL)$ angles on $\cO(n)$ couplers, one measurement setting regardless of $n$, germs entering as rotation angles with no amplitude encoding or state-preparation oracle, and a training gradient of three state evolutions independent of the parameter count, so the protocol exports unchanged to larger systems.

	\subsubsection{One shared germ wire}
	\label{sec:sharedgerm}
	The dataset's rank-correlation matrix has one eigenvalue carrying $64.1\%$ of the dependence, with eigenvector $(-0.33,-0.40,-0.42,-0.38,-0.16,+0.28,+0.40,+0.38)$. This is a global factor with a front/back sign flip, physically the shower-start depth. A rank-one reconstruction from that mode alone reproduces all $28$ off-diagonal correlations at correlation $0.970$. Cells $0$ and $7$ anti-correlate at $-0.48$ because they share a common shock, not because they are adjacent, and a model in which every latent is private to one qubit must transport that dependence through the coupler graph, where it decays with graph distance. On a sparse hardware-native graph this is the binding constraint, trained private-only, the path model's long-range correlations collapse toward zero.
	
	The shared wire supplies the global mode directly. One extra i.i.d.\ germ $\varepsilon_0$ enters the rotation angle of \emph{every} qubit in every block, Eq.~\eqref{eq:germ}, with a trainable per-qubit amplitude $a_i$. The private amplitude $w_i$ lets the optimizer set each cell's shared-variance fraction by shrinking the private term rather than over-rotating the shared one. This is not classical structure: $a_i$ and $w_i$ are rotation angles on the quantum circuit, in the same class as $\beta$, $\phi$ and $\theta$, and $\varepsilon_0$ is a ninth independent $\cU(0,1)$ noise wire, re-uploaded to every qubit where the ansatz already re-uploads eight such wires to one qubit each. It adds $2n$ single-qubit rotation coefficients and \emph{zero} two-qubit gates, zero SWAPs and zero two-qubit depth, so the hardware cost of the deployed circuit is identical with the wire on or off.

	With $\bm\phi=0$ the cells are independent only \emph{conditionally} on $\varepsilon_0$, so Proposition~\ref{prop:floor} and the strip test are stated conditionally, and the attribution of dependence to entanglement becomes a two-channel decomposition rather than a single claim. The price is stated because the alternative that keeps the attribution unconditional, an ancilla entangled with the register and traced out, has no cheap realization on a heavy-hex device, where every eight-qubit region is a forest and reaching all data qubits from one ancilla costs on the order of a hundred two-qubit layers.
	
	\begin{figure*}[tb]
		\centering
		\resizebox{\textwidth}{!}{%
			\begin{quantikz}[row sep={0.85cm,between origins}, column sep=0.30cm]
				\lstick{$\ket{0}$} & \gate{R_y(\pi\varepsilon_0)}\gategroup[4,steps=1,style={fill=gray!12,draw=gray!60!black,rounded corners,inner xsep=2pt,inner ysep=4pt},background,label style={label position=below,anchor=north,yshift=-0.22cm,text=gray!60!black}]{germ (block 1)}
				& \gate{R_z(\beta_{1,0})}\gategroup[4,steps=2,style={fill=blue!10,draw=blue!60!black,rounded corners,inner xsep=2pt,inner ysep=4pt},background,label style={label position=below,anchor=north,yshift=-0.22cm,text=blue!55!black}]{$D_1$ (diagonal)}
				& \gate[4]{\prod_{(i,j)\in E}\RZZ(\phi_{1,ij})}
				& \gate{R_y(\theta_{1,0})}\gategroup[4,steps=1,style={fill=green!12,draw=green!50!black,rounded corners,inner xsep=2pt,inner ysep=4pt},background,label style={label position=below,anchor=north,yshift=-0.22cm,text=green!45!black}]{wall}
				& \gate{R_y(\pi\varepsilon_0)}\gategroup[4,steps=1,style={fill=gray!12,draw=gray!60!black,rounded corners,inner xsep=2pt,inner ysep=4pt},background,label style={label position=below,anchor=north,yshift=-0.22cm,text=gray!60!black}]{germ re-uploaded}
				& \gate{R_z(\beta_{2,0})}\gategroup[4,steps=2,style={fill=blue!10,draw=blue!60!black,rounded corners,inner xsep=2pt,inner ysep=4pt},background,label style={label position=below,anchor=north,yshift=-0.22cm,text=blue!55!black}]{$D_2$}
				& \gate[4]{\prod_{(i,j)\in E}\RZZ(\phi_{2,ij})}
				& \gate{R_y(\theta_{2,0})}\gategroup[4,steps=1,style={fill=green!12,draw=green!50!black,rounded corners,inner xsep=2pt,inner ysep=4pt},background,label style={label position=below,anchor=north,yshift=-0.22cm,text=green!45!black}]{wall}
				& \meter{} \\
				\lstick{$\ket{0}$} & \gate{R_y(\pi\varepsilon_1)} & \gate{R_z(\beta_{1,1})} & & \gate{R_y(\theta_{1,1})} & \gate{R_y(\pi\varepsilon_1)} & \gate{R_z(\beta_{2,1})} & & \gate{R_y(\theta_{2,1})} & \meter{} \\
				\lstick{$\ket{0}$} & \gate{R_y(\pi\varepsilon_2)} & \gate{R_z(\beta_{1,2})} & & \gate{R_y(\theta_{1,2})} & \gate{R_y(\pi\varepsilon_2)} & \gate{R_z(\beta_{2,2})} & & \gate{R_y(\theta_{2,2})} & \meter{} \\
				\lstick{$\ket{0}$} & \gate{R_y(\pi\varepsilon_3)} & \gate{R_z(\beta_{1,3})} & & \gate{R_y(\theta_{1,3})} & \gate{R_y(\pi\varepsilon_3)} & \gate{R_z(\beta_{2,3})} & & \gate{R_y(\theta_{2,3})} & \meter{}
		\end{quantikz}}
		\vspace{0.35cm}
		\caption{QPCE interferometer, drawn for $n=4$, $L=2$ with unit germ amplitudes for legibility. In the deployed model each germ gate is $R_y(\pi(w_i\varepsilon_i+a_i\varepsilon_0))$, Eq.~\eqref{eq:germ}. The germ encodings open \emph{every} block, each germ wire therefore enters $L$ rotation angles, and $\langle Z_i\rangle$ is a trigonometric chaos expansion of exact degree $L$ in each wire (Eq.~\eqref{eq:chaos}). Trainable $R_z(\beta)$ and $\RZZ(\phi)$ form the diagonal blocks $D_\ell$. The walls $R_y(\theta_{\ell,i})$ are trainable. All $\RZZ$ within a block commute, so the coupler block is one multi-wire diagonal gate that schedules in $\chi'(G)$ parallel layers (Sec.~\ref{sec:topology}). Setting all $\phi_{\ell,ij}=0$ and freezing $\varepsilon_0$ makes the circuit a tensor product of single-qubit unitaries, each driven only by its own germ, which is the content of Proposition~\ref{prop:floor}. The deployed $n=8$ instance has $L=6$, $|E|=7$ on a path, $154$ angles.}
		\label{fig:circuit}
	\end{figure*}
	
	\subsection{Hardware-native entangler topology}
	\label{sec:topology}
	
	The natural first choice of coupler graph is the dense ring-plus-skip-$2$ graph $C_8(1,2)$, $E=\{(i,i\!+\!1)\}\cup\{(i,i\!+\!2)\}$ modulo $n$, $|E|=2n=16$, chosen as a geometric prior for lateral shower correlations. Confronted with hardware, that choice fails in a way worth reporting exactly, because it is generic to dense logical graphs. An exhaustive subgraph-embedding search shows $C_8(1,2)$ embeds in \emph{no} present device topology, zero embeddings on the heavy-hex lattice and zero on the square lattice, so every one of its couplers beyond a spanning structure must be SWAP-routed. Gate-level compilation of the routed circuit gives $316$ two-qubit pulses in $98$ two-qubit layers at $L=4$, against a commuting-block depth floor of $\chi'(C_8(1,2))\cdot L=16$. Routing inflates the two-qubit \emph{depth} by a factor of $6$, because a SWAP serializes a layer that was fully parallel. At a representative per-gate error the resulting circuit fidelity is $0.12$, and the shot budget for the full certification exceeds $10^8$. The dense graph is not an optimization target but is simply undeployable.
	
	The deployed model therefore inverts the design order. The coupler graph is chosen from the graphs the target device already has. All $\RZZ$ within a block commute, so a proper edge colouring of $E$ is a valid schedule with each colour class executing in parallel, and the two-qubit depth floor per block is the chromatic index $\chi'(G)$. On the heavy-hex lattice~\cite{IBMHeavyHex_2021}, whose girth is $12$, every eight-vertex subgraph is a forest and the eight-vertex path ($|E|=7$, $\chi'=2$) is the unique zero-SWAP choice. On a square lattice the ring ($\chi'=2$) and the $2\times4$ grid ($\chi'=3$) also embed. The deployed instance is the path at $L=6$ with $L|E|=42$ two-qubit pulses in $L\chi'=12$ layers, zero SWAPs, about $1.0\,\mu$s of circuit time, verified gate-by-gate against the exact statevector to $10^{-15}$ with the depth identity $\mathrm{depth}_{2q}=L\chi'$ asserted at build time. On a path the causal cone of $\langle Z_i\rangle$ is $\pm L$ sites, so depth buys range at two layers per unit of $L$ where the routed dense graph paid $24.5$. The long-range dependence the path cannot carry is supplied by the shared germ wire ( see Sec.~\ref{sec:sharedgerm}) at zero two-qubit cost.
	
	At deployment the logical path is placed by exhaustive search rather than by a heuristic. Specifically on \texttt{ibm\_fez}, the path has $136$ zero-SWAP embeddings, each scored by the sum of its calibrated two-qubit and readout error rates, errors add in the exponent of the fidelity product, so the sum is the correct objective, and the best-scoring eight-qubit region is selected. A data-driven alternative to the geometric prior, the maximum spanning tree of $|\tau_{ij}|$~\cite{Bedford_Cooke_2002,Joe_2014}, identifies one non-local coupler, cells $2$ and $6$ at $|\tau|=0.61$, that no hardware-native graph at $n=8$ can realize without routing. In the deployed model that dependence travels over the shared wire instead, which is the concrete sense in which the wire and the graph are complementary channels.
	
	\begin{figure}[t]
		\centering
		\begin{tikzpicture}[scale=0.78]
			\node[font=\small\bfseries] at (0,2.55) {Coupler graph + shared wire};
			\foreach \i in {0,...,7} {
				\node[circle, draw, fill=green!15, minimum size=0.60cm, font=\scriptsize\bfseries, inner sep=0] (q\i) at ({(\i-3.5)*1.05},0) {$q_\i$};
			}
			\foreach \i/\j in {0/1, 1/2, 2/3, 3/4, 4/5, 5/6, 6/7} {
				\draw[thick, blue!65!black] (q\i) -- (q\j);
			}
			\node[circle, draw, fill=red!12, minimum size=0.60cm, font=\scriptsize\bfseries, inner sep=0] (e0) at (0,1.65) {$\varepsilon_0$};
			\foreach \i in {0,...,7} {
				\draw[thin, red!60!black, densely dashed] (e0) -- (q\i);
			}
			\node[font=\small\itshape, text=blue!65!black] at (-2.6,-0.82) {path (7 edges, $\chi'=2$)};
			\node[font=\small\itshape, text=red!60!black] at (2.6,-0.82) {shared germ (0 two-qubit gates)};
			
			\node[font=\small\bfseries] at (0,-1.55) {Calorimeter cells};
			\foreach \i/\x in {0/-3.5, 1/-2.5, 2/-1.5, 3/-0.5, 4/0.5, 5/1.5, 6/2.5, 7/3.5} {
				\node[rectangle, draw, fill=orange!15, minimum size=0.72cm, font=\scriptsize\bfseries, inner sep=0] (p\i) at (\x,-2.45) {$p_\i$};
			}
			\foreach \i/\j in {0/1, 1/2, 2/3, 3/4, 4/5, 5/6, 6/7} {
				\draw[thin, orange!60!black, dashed] (p\i) -- (p\j);
			}
			\node[font=\small\itshape, text=orange!60!black] at (0,-3.50) {longitudinal shower profile, $q_i \leftrightarrow p_i$};
		\end{tikzpicture}
		\caption{Deployed coupler graph and the one-to-one map from qubits to cells. Solid lines mark the eight-qubit path, $|E|=7$, native to the heavy-hex lattice with zero routing SWAPs. Each edge carries one trainable phase per block, $L|E|=42$ coupler phases in total, and the commuting block schedules in $\chi'=2$ parallel layers. The dashed line marks the shared germ wire $\varepsilon_0$, read by every qubit with trainable amplitude $a_i$ at zero two-qubit cost, carrying the rank-one global mode of the data (Sec.~\ref{sec:sharedgerm}). Removing the couplers and freezing $\varepsilon_0$ makes the generated cells \emph{exactly} independent (Proposition~\ref{prop:floor}). Removing the couplers with $\varepsilon_0$ free isolates the shared-latent channel. The dense $C_8(1,2)$ alternative embeds in no present device and appears only as the routed reference we quote in Sec.~\ref{sec:topology}.}
		\label{fig:topology}
	\end{figure}
	
	\subsection{Single-setting linear readout}
	\label{sec:readout}
	
	QPCE measures the $n$ observables $\{Z_i\}$ from one computational-basis setting and converts them to intensities by the affine map
	\begin{equation}
		Y_i \;=\; m_i + s_i\,\langle Z_i\rangle ,
		\label{eq:readout}
	\end{equation}
	where $(m_i,s_i)$ are set once from the training range, midpoint and half-span per cell, and never fitted. They are unit-conversion constants of the same standing as a calorimeter's GeV-per-count calibration, $2n$ stored classical numbers and zero trainable ones. Because the intensities are read out from expectation values, the shot-noise cost of estimating them on hardware admits an exact and useful statement, the predicted shot contraction of every correlation (Sec.~\ref{sec:hardware}).
	
	\subsection{Training}
	\label{sec:loss}
	
	The model is trained by minimizing a single scalar, the energy distance~\cite{Szekely_2013,Baringhaus_2004}
	\begin{equation}
		\mathcal{E}(P,Q) = 2\,\E\lVert X-Y\rVert - \E\lVert X-X^\prime\rVert - \E\lVert Y-Y^\prime\rVert ,
		\label{eq:energy}
	\end{equation}
	between generated and Geant4 showers on per-cell standardized intensities. $\mathcal{E}$ is a metric on distributions. It vanishes iff $P=Q$, so marginals, correlations and all higher structure are consequences of one objective rather than separately weighted terms. The underlying kernel is the distance kernel $k(x,y)=\lVert x\rVert+\lVert y\rVert-\lVert x-y\rVert$, which is characteristic and carries no scale parameter, so no bandwidth selection enters the objective. This is a deliberate and standard choice rather than an omission. The energy distance is an established two-sample statistic~\cite{Szekely_2013,Baringhaus_2004}, and it equals $2\,\MMD^2$ with exactly this kernel~\cite{Sejdinovic_2013} (verified numerically to $1.9\times10^{-14}$), so the choice against the conventional Gaussian-kernel MMD~\cite{Gretton_2012_MMD,BornMachine_2018} is a choice of \emph{kernel} inside one framework, trading the Gaussian's tunable bandwidth for a kernel with none. Both are implemented. Trained to convergence the Gaussian mixture at $\{ \tfrac14,\tfrac12,1,2,4\}\times\sigma_{\mathrm{med}}$ gives a modestly better dependence residual at slightly worse marginals in the dense-graph reference configuration, and both kernels are implemented and reported.
	
	Under any joint loss the dependence contribution is small. On this dataset the loss gap between correlated and column-shuffled samples with identical marginals is $\mathcal{E}=0.067$, two orders below the initial marginal mismatch, so correlations move only near convergence. A capacity probe minimizing $\lVert\mathrm{Corr}(Z)-\mathrm{Corr}(Y)\rVert_F^2$ alone shows the dense encoding at $L=2$ already reaches mean $|\rho_S|=0.539$ of the data's $0.576$, so the encoding, not the optimizer, sets the ceiling we measure in Sec.~\ref{sec:densegraph}. 
	%	Any fixed iteration budget that terminates inside the marginal-fitting phase reads, falsely, as an expressivity failure of the ansatz, which is why exact gradients and full convergence are part of the algorithm rather than an implementation detail.

	The loss is a scalar functional of $\{\langle Z_i\rangle_b\}$, so it is back-propagated through the state~\cite{Jones_2020_adjoint}. With $O_b=\sum_i C_{bi}Z_i$ diagonal, the costate $\ket{\lambda_b}=O_b\ket{\psi_b}$ is one elementwise product, and a single backward sweep yields every angle derivative at the cost of three state evolutions, independent of parameter count. Agreement with the parameter-shift rule, which remains the hardware-executable gradient, is $10^{-14}$. 
	%	The speedup at $n=8$, $L=4$ is $47\,\mathrm{s}\to1.7\,\mathrm{s}$ per L-BFGS iteration, and it is what brings full convergence within reach ($600$ iterations, common-random-number germ batch, unbiased $U$-statistic estimator). 
	Because the frozen-batch $U$-statistic can drift below zero by sample memorization, the fit additionally evaluates the loss on \emph{fresh} germs at fixed intervals, stops when the fresh-germ value ceases to improve, and saves the best-validation iterate rather than the last. The frozen-batch curve continues to fall by a factor of two after the fresh-germ curve has flattened, which is the memorization made visible. Every reported number is computed out of sample, fresh germs against the held-out test split.
	
	\subsection{Parameter accounting}
	\label{sec:accounting}
	
	The only quantities adjusted by any optimizer in QPCE are the $154$ gate angles $(\bm\beta,\bm\phi,\bm\theta,\bm a,\bm w)$. No classical quantity is fitted against any loss, at any stage, by any method. There is no linear solve, no regularization path, no hyperparameter selected against a validation objective beyond the early-stopping iterate itself, and no classical head of any kind. The deployed model stores the $154$ angles and the $2n=16$ readout constants, $170$ numbers in total, and Table~\ref{tab:resources} collects the complete budget.
	
	\begin{table}[t]
		\small
		\centering
		\caption{QPCE resources and model dimensions for the deployed instance ($n=8$, $L=6$, path graph, one shared germ wire). The two-qubit depth is the chromatic-index floor $L\chi'(G)$, attained exactly because the circuit is native and edge-coloured (Sec.~\ref{sec:topology}).}
		\label{tab:resources}
		\renewcommand{\arraystretch}{1.15}
		\begin{tabular}{lc}
			\toprule
			Resource & QPCE \\
			\midrule
			Qubits                              & $8$ \\
			Interferometer blocks $L$           & $6$ \\
			Coupler edges $|E|$ (path)          & $7$ \\
			$R_z$ phases $\bm\beta$ ($Ln$)      & $48$ \\
			$\RZZ$ phases $\bm\phi$ ($L|E|$)    & $42$ \\
			Wall angles $\bm\theta$ ($Ln$)      & $48$ \\
			Shared-germ amplitudes $\bm a$ ($n$)& $8$ \\
			Private-germ amplitudes $\bm w$ ($n$)& $8$ \\
			\textbf{Quantum parameters}         & $\bm{154}$ \\
			\textbf{Fitted classical parameters}& $\bm{0}$ \\
			Stored classical constants ($2n$)   & $16$ \\
			\midrule
			Germ wires, i.i.d.\ $\cU(0,1)$      & $n+1=9$ \\
			Measured observables                & $8\ \langle Z_i\rangle$ \\
			Measurement settings per sample     & $\bm{1}$ \\
			Two-qubit pulses (fractional $\RZZ$)& $42$ \\
			Two-qubit layers ($L\chi'$)         & $12$ \\
			Routing SWAPs                       & $\bm{0}$ \\
			Shots per germ (hardware)           & $1152$ \\
			Circuits per generated shower       & $\bm{1}$ \\
			\midrule
			Training samples $N_{\text{train}}$ & $2600$ \\
			Test samples $N_{\text{test}}$      & $1400$ \\
			\bottomrule
		\end{tabular}
	\end{table}
	
	% ============================================================================
	\section{Verifying that the Dependence is Generated by Entanglement}
	\label{sec:verification}
	% ============================================================================
	
	Verification is the \textsc{Verify} procedure of Algorithm~\ref{alg:qpce}, and the shipped training script runs it automatically at the end of every fit. The generated dependence has two channels placed there by design, the entangling couplers and the shared germ wire, and the protocol separates them. First it confirms that the circuit stripped of its entanglers produces no dependence at all, conditionally on the shared germ, whatever the remaining angles happen to be. Then it measures the shared-latent channel alone by releasing the wire while the couplers stay off. Whatever dependence the full model shows beyond what the released wire supplies is produced by the entangling gates. We state three propositions about the construction and turn each one into an experiment. A hybrid model of comparable scope could not pass these tests, and in most cases could not even formulate them, because its dependence has no named location inside the model.
	
	Every statistic we use is invariant under monotone per-cell response, which is what allows the same numbers to be quoted in simulation and on hardware. It is worth being exact about the reach of these fixed-basis measurements. They certify where in the circuit the generated dependence comes from, which is a statement about the machine rather than about the entanglement of the state it prepares. A single measurement basis yields one probability distribution, and a separable device reproducing the same $\langle Z_i\rangle(\bm\varepsilon,\varepsilon_0)$ would yield it too, since entangled states exist whose measurement statistics admit local hidden-variable models~\cite{Werner_1989}. Certifying the state itself calls for more than one setting~\cite{Horodecki_RMP_2009}, which is what we add in Sec.~\ref{sec:bellcap}.
	
	\subsection{The no-entanglement floor}
	
	\begin{proposition}[Entangler-free $\Rightarrow$ conditional independence]
		\label{prop:floor}
		Let $\phi_{\ell,ij}=0$ for all $\ell$ and all $(i,j)\in E$. Then for \emph{any} values of $\bm\beta,\bm\theta,\bm a,\bm w$ and any $L$, the state \eqref{eq:circuit} is a product state, and the generated intensities $Y_i=m_i+s_i\langle Z_i\rangle$ are mutually independent \emph{conditionally on the shared germ} $\varepsilon_0$. At any fixed $\varepsilon_0$ they are mutually independent random variables, and in the private-only family ($\bm a=0$) the independence is unconditional. Because no classical map follows the circuit, the conclusion applies to the entire generative output, marginals and dependence alike.
	\end{proposition}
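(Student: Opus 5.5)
The plan is to show that with all couplers off the circuit of Eq.~\eqref{eq:circuit} factorizes into single-qubit unitaries, and then to read independence off that factorization. Setting $\phi_{\ell,ij}=0$ makes every $\RZZ(\phi_{\ell,ij})=\exp(0)=I$, so $D_\ell=\prod_i R_z(\beta_{\ell,i})=\bigotimes_i R_z(\beta_{\ell,i})$. The layers $W_\ell$ and $G(\bm\varepsilon,\varepsilon_0)$ are already tensor products by Eqs.~\eqref{eq:germ} and \eqref{eq:blocks}. A product of tensor-product operators on the same factorization is again a tensor product, $(\bigotimes_i A_i)(\bigotimes_i B_i)=\bigotimes_i A_iB_i$. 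Multiplying over all $L$ blocks therefore gives
\begin{equation*}
U(\bm\varepsilon,\varepsilon_0)=\bigotimes_{i} U_i(\varepsilon_i,\varepsilon_0),\qquad U_i=\prod_{\ell=1}^{L} R_y(\theta_{\ell,i})\,R_z(\beta_{\ell,i})\,R_y\bigl(\pi(w_i\varepsilon_i+a_i\varepsilon_0)\bigr).
\end{equation*}
Acting on $\ket{0^n}=\bigotimes_i\ket{0}$, this yields the product state $\bigotimes_i U_i\ket{0}$. That settles the first claim for arbitrary $\bm\beta,\bm\theta,\bm a,\bm w$ and $L$.

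Next, compute $\langle Z_i\rangle$ on the product state. Since $Z_i$ acts only on qubit $i$ and every other factor has unit norm, $\langle Z_i\rangle=\bra{0}U_i^\dagger Z U_i\ket{0}=:f_i(\varepsilon_i,\varepsilon_0)$. This is a deterministic function of only $(\varepsilon_i,\varepsilon_0)$. Hence $Y_i=m_i+s_if_i(\varepsilon_i,\varepsilon_0)$, where $(m_i,s_i)$ are fixed constants by Eq.~\eqref{eq:readout}.

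The probabilistic step is the standard fact that measurable functions of independent variables are independent. Fix $\varepsilon_0=c$. Then $Y_i=g_i(\varepsilon_i)$ with $g_i(x)=m_i+s_if_i(x,c)$, which is continuous and hence Borel measurable. Because $\varepsilon_1,\dots,\varepsilon_n$ are i.i.d.\ and independent of $\varepsilon_0$, the joint law factorizes: $P(\bigcap_i\{Y_i\in B_i\})=\prod_i P(\varepsilon_i\in g_i^{-1}(B_i))$. This gives mutual independence at every fixed $\varepsilon_0$. Integrating over $\varepsilon_0$ gives conditional independence given $\varepsilon_0$, because the conditional law of $\bm\varepsilon$ given $\varepsilon_0$ is its unconditional product law.

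In the private-only family $\bm a=0$, the function $f_i$ does not depend on $\varepsilon_0$ at all, so the same argument applies with no conditioning. The final sentence of the statement follows because the generative output is, by construction, exactly $(Y_1,\dots,Y_n)$ with no classical map after the readout. Independence therefore covers the whole output law.

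The only step needing care is the factorization. One must check that the ordering within a block, read right to left, never interleaves a two-qubit gate once $\phi=0$. One must also check that the germ layer itself is a tensor product even though $\varepsilon_0$ is shared. It is: sharing a classical parameter does not couple the qubits as operators, it only makes the factors $U_i$ statistically dependent through $\varepsilon_0$. That is exactly why the independence holds only conditionally.
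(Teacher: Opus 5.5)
Your proposal is correct and follows the paper's own proof: with $\phi_{\ell,ij}=0$ the circuit factorizes as $\bigotimes_i V_i(\varepsilon_i,\varepsilon_0)$, so each $\langle Z_i\rangle$ is a deterministic function of $(\varepsilon_i,\varepsilon_0)$ alone. Independence conditional on $\varepsilon_0$, and unconditional independence when $\bm a=0$, then follow because these are functions of independent private germs. Your measurability check, and your remark that a shared classical parameter does not couple the qubits as operators, make explicit what the paper leaves implicit.
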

	
	\begin{proof}
		With all $\RZZ$ removed, every gate in \eqref{eq:circuit} acts on a single qubit, so the circuit unitary factorizes as $\bigotimes_i V_i(\varepsilon_i,\varepsilon_0)$ with $V_i = \prod_{\ell}\bigl[R_y(\theta_{\ell,i})\,R_z(\beta_{\ell,i})\,R_y(\pi(w_i\varepsilon_i+a_i\varepsilon_0))\bigr]$. Acting on $\ket{0^n}$ this yields a product state, and the Born expectation of a local observable on a product state depends only on its own factor, so $\langle Z_i\rangle = g_i(\varepsilon_i,\varepsilon_0)$ for a deterministic $g_i$. Conditionally on $\varepsilon_0$ the quantities $g_i(\varepsilon_i,\varepsilon_0)$ are functions of the independent private germs alone, hence conditionally independent, with $\bm a=0$ the dependence on $\varepsilon_0$ vanishes and the independence is unconditional.
	\end{proof}
	
	\begin{corollary}[A data-computable performance floor]
		\label{cor:floor}
		Every entangler-free member of the QPCE family, for every setting of its remaining non-coupler angles and at every depth, has rank-space discrepancy at least
		\begin{equation}
			F \;\equiv\; \MMD^2\bigl(\Pind,\;\hat C_{\mathrm{data}}\bigr),
			\label{eq:floor}
		\end{equation}
		where $\hat C_{\mathrm{data}}$ is the empirical distribution of the rank-transformed data and $\Pind$ its column-shuffled counterpart, computable \emph{from the data alone}, with no quantum resource.
	\end{corollary}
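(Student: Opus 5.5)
The plan is to reduce the corollary to Proposition~\ref{prop:floor} by passing to rank space, where the copula absorbs every marginal detail and independence becomes a single fixed distribution. I read ``entangler-free member'' in the sense of the \textsc{Verify} null: $\bm\phi=0$ with $\varepsilon_0$ frozen at any fixed value, or equivalently the private-only family $\bm a=0$. Proposition~\ref{prop:floor} covers exactly these cases. With $\varepsilon_0$ free the cells are only conditionally independent, and the corollary is not claimed there.

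\emph{Step 1: independence of the outputs.} Apply Proposition~\ref{prop:floor}. For any $\bm\beta,\bm\theta,\bm a,\bm w$ and any $L$, the outputs are $Y_i=m_i+s_i\,g_i(\varepsilon_i,\varepsilon_0)$ with $\varepsilon_0$ fixed, so they are mutually independent.

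\emph{Step 2: the copula is $\Pind$.} Next I will show that the rank-transformed vector $(u_1,\dots,u_n)$, with $u_i=F_i(Y_i)$, has the independence copula $\Pind$ as its law. Each $u_i$ is a measurable function of $Y_i$ alone, so independence is inherited. If $F_i$ is continuous, each $u_i$ is uniform on $(0,1)$. Independence plus uniform marginals gives the product measure, which is $\Pind$. This step uses only that the rank map is monotone per cell, so it holds whatever the fitted marginals are. That is why the floor does not depend on the remaining angles or on $L$.

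\emph{Step 3: the model's discrepancy.} The rank-space discrepancy of any generator with copula $C$ is $\MMD^2(C,\hat C_{\mathrm{data}})$. Substituting $C=\Pind$ gives exactly $F$ in Eq.~\eqref{eq:floor}. Hence $F$ is attained by every entangler-free member, so no member does better.

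\emph{Step 4: the empirical statement.} Here I make ``computable from the data alone'' precise. Column-shuffling the rank-transformed data preserves each empirical marginal and destroys the dependence. The shuffled sample is therefore a draw from (an empirical version of) $\Pind$, so $F$ is estimated without running any circuit.

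\emph{Main obstacle: ties in the ranks.} Step 2 needs $F_i$ to be continuous, and this can fail. The function $g_i(\cdot,\varepsilon_0)$ is a trigonometric polynomial, so it is constant on an interval only if it is constant everywhere. In that degenerate case $Y_i$ is a point mass and the ranks tie. I would handle this with randomized tie-breaking in the rank transform, which again yields uniform $u_i$ independent of the other cells, so Step 2 goes through.

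\emph{Secondary issue: the finite-sample meaning of ``at least''.} At the population level the relation is an equality. The inequality enters only through estimation: the generated empirical copula differs from the shuffled data copula by sampling noise. I would state the bound for the unbiased $U$-statistic in expectation and note that the fluctuations are of the sampling-floor order already quantified in the strip test.
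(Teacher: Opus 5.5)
Your argument is correct and is essentially the paper's own justification, which consists of the single remark that Proposition~\ref{prop:floor} holds for every remaining angle and depth. That proposition makes every entangler-free, frozen-wire member produce independent cells, so its rank law is the independence copula and its rank-space discrepancy is $F$. Your restriction to frozen $\varepsilon_0$ or $\bm a=0$ (which the paper also makes, as a ``frozen-wire family''), your tie-breaking for a constant $g_i$, and your remark that ``at least'' is an equality at the population level are refinements the paper leaves implicit, not a different route.
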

	
	As a benchmark, this floor is stronger than an empirical baseline in three respects. It is provable, being not the best entangler-free model we managed to train but a bound over the entire entangler-free, frozen-wire family. It is free, since column shuffling costs nothing and requires no second training run. And no choice of the remaining angles, no depth and no readout constant can move it, because Proposition~\ref{prop:floor} holds for all of them. One bookkeeping rule is mandatory and easy to violate silently, is that the floor and every energy distance compared against it must be computed in the \emph{same coordinate}. On this dataset the z-scored floor is $\numFfloor$ and the raw-coordinate floor is $0.0067$, a factor of $20$ apart. Every number in this paper is z-scored, and the shipped scoring tool computes the floor and the model distance in one call so that the coordinates cannot drift apart.
	
	Evaluated on z-scored coordinates, split-half so that the $U$-statistic is not contaminated by duplicate rows, we measure $F = \numFfloor \pm \numFfloorSd$ from the $2600$ training rows, and all model distances are scored against the $1400$ held-out rows in the same coordinate.
	
	\subsection{Detection significance versus effect size}
	\label{sec:sigma_vs_D}
	
	A model's position relative to the floor can be quoted two ways, and only one of them is a result. The detection significance
	\begin{equation}
		\sigma \;=\; \frac{F-\MMD^2(\hat C_{\mathrm{gen}},\hat C_{\mathrm{data}})}{\mathrm{SE}(F)}
	\end{equation}
	has the shuffle standard error in its denominator, so it answers whether the model is distinguishable from independence, a question with a very low bar. The effect size
	\begin{equation}
		D \;=\; 1-\frac{\MMD^2(\hat C_{\mathrm{gen}},\hat C_{\mathrm{data}})}{F}\;\in\;[0,1]
		\label{eq:effectsize}
	\end{equation}
	answers how much of the available dependence the model captured.
	
	The distinction is not academic. Table~\ref{tab:sigma_calibration} calibrates both statistics against Gaussian models carrying a controlled fraction of the data's dependence. A model reproducing only $5\%$ of the dependence already registers at $4.6\sigma$, and $10\%$ registers at $12.5\sigma$. A many-$\sigma$ headline is therefore a statement about the tightness of the shuffle error bar, not about model quality. \textbf{We report $D$ as the result and $\sigma$ only as a detection flag.}
	
	\begin{table}[t]
		\centering
		\caption{Calibration of the two floor statistics. A Gaussian reference model carrying a controlled fraction $s$ of the data's dependence is scored against the floor. The detection significance $\sigma$ saturates almost immediately, while the effect size $D$ tracks $s$ faithfully.}
		\label{tab:sigma_calibration}
		\renewcommand{\arraystretch}{1.15}
		\begin{tabular}{ccc}
			\toprule
			dependence fraction $s$ & $\sigma$ below floor & effect size $D$ \\
			\midrule
			$0.05$ & $4.6$  & $0.059$ \\
			$0.10$ & $12.5$ & $0.162$ \\
			$0.25$ & $26.9$ & $0.348$ \\
			$0.50$ & $52.2$ & $0.675$ \\
			$1.00$ & $76.3$ & $0.987$ \\
			\bottomrule
		\end{tabular}
	\end{table}

	\subsection{Switching off the entanglers}
	\label{sec:strip}
	
	Corollary~\ref{cor:floor} bounds the dependence available to an entangler-free model of this family. We now measure what the trained model itself produces when its entanglers are switched off, in the two configurations that a shared germ wire makes available. We regenerate from the trained circuit with all $\phi_{\ell,ij}$ set to zero, leaving every other angle, the readout constants and the germ draws untouched, and report
	\begin{equation}
		\mathcal{T} \;=\; \max_{i\neq j}\bigl|\,\mathrm{corr}(u_i,u_j)\,\bigr|
	\end{equation}
	once with $\varepsilon_0$ frozen at its median, which realizes the conditioning of Proposition~\ref{prop:floor}, and once with $\varepsilon_0$ free, which isolates the shared-latent channel.
	
	\emph{Frozen shared wire.} Proposition~\ref{prop:floor} predicts $\mathcal{T}$ consistent with sampling noise. Measured on the deployed checkpoint, $\mathcal{T}=\numStripT$ (noiseless, $\numNgen$ draws) against a maximum of $28$ order statistic of the sampling floor of $\numStripFloor$. The couplers-off, wire-frozen correlations lie below the noise ceiling and every unit of dependence beyond the shared wire collapses.
	
	\emph{Free shared wire.} With the couplers off and $\varepsilon_0$ free, the shared-latent channel alone carries $\max|\rho_S|=\numStripFree$ and mean $|\rho_S|=\numStripFreeMean$, against mean $|\rho_S|=\numRho$ for the full model. This is the global mode that the shared wire is designed to supply, measured here in isolation. The honest attribution statement is therefore a decomposition. The shared wire supplies a rank-one global mode at mean $|\rho_S|=\numStripFreeMean$, and the entangling gates supply the remainder, the pair-resolved structure and the bulk of the magnitude, which vanishes identically when they are removed.
	
	Both configurations are needed. A model can beat the independence floor while drawing its dependence from correlated noise wires or from a shared classical input, and a dense germ encoding, in which every qubit reads every germ, is such a source. With the shared wire frozen we detect any source of this kind that operates through the private wires, and with the shared wire free we measure the size of the one such source the model contains by design. Dependence carried by a shared input survives the removal of the couplers, while dependence generated by entanglement disappears with them, so the two measured numbers separate the two origins.
	
	Because the \textsc{Verify} procedure reuses the generation circuit at two parameter settings, the test costs two additional generation passes, runs on any backend the model runs on, and is emitted by the training script without being requested.
	
	\subsection{Certification with measurement settings}
	\label{sec:bellcap}
	
	The measurements reported in Sec.~\ref{sec:strip} are taken in one fixed basis, and adding measurement settings extends what they certify. Consider the trained generator at a fixed germ as a conditional sampler in which each cell is measured in one of two local bases chosen per shot. For a classical generative model with local response, in which each cell output is a function of a shared latent variable and of that cell's own setting, the four resulting correlations satisfy the Clauser-Horne-Shimony-Holt inequality $|S|\le2$, whatever the architecture and however many parameters it carries. We verify this bound by enumerating all $16$ deterministic local-response strategies, whose convex hull is the set of laws available with shared randomness. We call $S$ evaluated at the optimal settings for a given germ and pair the CHSH capability of that conditional generator, and it measures how far the conditional law lies outside the reach of the classical class.
	
	Measured on the deployed checkpoint over $10^{3}$ germs and all seven coupler pairs, the trained states reach a CHSH capability of $\numBellSmax$ against the Tsirelson maximum $2\sqrt2$, and $\numBellFrac$ of germ-pair instances lie above the classical bound. A simulated finite-shot protocol at the strongest condition gives $S=\numBellShot\pm\numBellShotSe$, an exclusion of $\numBellSigma$ standard deviations, for a model trained on calorimeter data with no Bell objective anywhere in its loss. Switching the entanglers off returns pure product states whose CHSH capability equals the classical bound to machine precision, $S=2.000000$ at every germ and every pair. The couplers are therefore the source of the conditional statistics that lie beyond the classical class, which is the same conclusion we reach from the fixed-basis measurements in Sec.~\ref{sec:strip} for the dependence itself. Executing this protocol on hardware requires the deployed circuit and one setting-dependent single-qubit rotation on each tested cell, and would make the certification device independent. We report the capability here as a measured property of the trained model in simulation.
	
	\subsection{Effect of device noise on the generated dependence}
	
	\begin{proposition}[Rank dependence is invariant under per-cell monotone device response]
		\label{prop:contraction}
		Let the device return distorted expectations $\tilde m_i=h_i(m_i)$ with each $h_i$ strictly increasing on $[-1,1]$, in particular the affine gain $\tilde m_i=c_im_i+d_i$ with $c_i>0$ of depolarizing contraction with readout bias. Then every rank statistic of the generated showers (Spearman and Kendall correlations, the empirical rank distribution, the rank MMD entering $D$) is \emph{exactly unchanged}, while the marginal densities are distorted. Conversely, no per-cell calibration, however fitted, can alter a rank statistic.
	\end{proposition}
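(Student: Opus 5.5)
The argument rests on one fact. Rank statistics are functions of the within-cell orderings of the generated samples, and a strictly increasing per-cell map preserves those orderings exactly.

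Fix a generated sample of $N$ showers with device expectations $m_{bi}$, $b=1,\dots,N$, and let $Y_{bi}=m_i^{\mathrm{ro}}+s_i m_{bi}$ be the read-out intensities, where $(m_i^{\mathrm{ro}},s_i)$ are the readout constants of Eq.~\eqref{eq:readout}. These constants satisfy $s_i>0$ (half-span). Then $Y_{bi}$ is itself a strictly increasing function of $m_{bi}$. So under distortion the intensity becomes $\tilde Y_{bi}=m_i^{\mathrm{ro}}+s_i h_i(m_{bi})$, which is $H_i(Y_{bi})$ with $H_i$ the composition of three strictly increasing maps, hence strictly increasing.

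The core step is the lemma that for every cell $i$ and every pair $b,b'$,
\begin{equation}
  Y_{bi}<Y_{b'i}\iff H_i(Y_{bi})<H_i(Y_{b'i}),
  \qquad
  Y_{bi}=Y_{b'i}\iff H_i(Y_{bi})=H_i(Y_{b'i}).
\end{equation}
The second equivalence uses injectivity. From the lemma, the rank vector $R_{bi}=\sum_{b'}\mathbf 1[Y_{b'i}\le Y_{bi}]$ is unchanged, including the tie convention (midranks included). It follows that the pseudo-observations $u_{bi}=R_{bi}/(N+1)$ are unchanged sample by sample, and with them the empirical rank distribution $\hat C_{\mathrm{gen}}$.

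Each listed statistic is then a function of the $u_{bi}$ alone, so each is unchanged. Spearman's $\rho_S$ is the Pearson correlation of ranks. Kendall's $\tau$ depends only on the signs of $(Y_{bi}-Y_{b'i})(Y_{bj}-Y_{b'j})$. $\MMD^2(\hat C_{\mathrm{gen}},\hat C_{\mathrm{data}})$, and therefore $D$ of Eq.~\eqref{eq:effectsize}, depends on the data only through $\hat C_{\mathrm{data}}$ and the fixed $F$. At the population level the same argument shows the copula of $(\tilde Y_i)$ equals that of $(Y_i)$, since $P(\tilde Y_i\le H_i(y))=P(Y_i\le y)$.

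Marginal distortion is shown by example. An affine gain with $c_i\neq1$ rescales the support and variance of $Y_i$, so, for instance, $W_1$ to the data changes.

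The converse is the same lemma applied once more. Any per-cell calibration $g_i$, fitted by any procedure, is applied after generation. If $g_i$ is strictly increasing, the lemma shows it leaves ranks fixed. A decreasing calibration is not a calibration in the sense of the statement, and I would say so explicitly: it would only flip the signs of correlations involving cell $i$.

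The only real obstacle is bookkeeping rather than mathematics. Ties occur with probability zero in exact simulation, but finite-shot estimates lie on a lattice of multiples of $2/S$, so ties are real on hardware. I would handle them by requiring the same tie-breaking rule before and after distortion and noting that strict monotonicity maps tie classes bijectively onto tie classes.

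I would also state the scope of "exactly unchanged" carefully. It holds for each fixed sample of expectations, conditional on the distortion being a deterministic per-cell map. Stochastic shot noise is not such a map, so its effect on rank correlations, which is a genuine contraction, lies outside this proposition and is treated separately in Sec.~\ref{sec:hardware}.
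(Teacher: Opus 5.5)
Your proof is correct and takes the paper's route. The distorted intensity $\tilde Y_i$ is a strictly increasing per-cell function of the ideal $Y_i$, ranks are invariant under such maps, every listed statistic is a function of the ranks, and the converse follows because a per-cell calibration is itself such a map. Beyond what the paper's one-line proof states, you handle ties on the $2/S$ shot lattice, make explicit that $\hat C_{\mathrm{gen}}$ must be ranked within the generated sample, and exclude stochastic shot noise from the scope.
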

	
	\begin{proof}
		$\tilde Y_i=m_i+s_i h_i$ is a strictly increasing per-cell function of the ideal output. Ranks are invariant under strictly increasing per-coordinate maps, and every quantity named is a function of the ranks. The converse holds because a per-cell calibration is itself such a map.
	\end{proof}
	
	We use these two statements differently. The first separates device error into a monotone per-cell response, which leaves every dependence figure untouched and makes those figures calibration-free hardware observables, and a non-monotone, cross-cell, germ-dependent residual, which is the part of the error that degrades dependence. The second makes the calibration rigorous. Since a per-cell map cannot create rank dependence, calibrating the marginals leaves the verification statistics unchanged. The readout-confusion unfolding and the per-qubit gain absorption described in Sec.~\ref{sec:hardware_probes} are both strictly monotone per-cell maps, so every dependence figure in this paper is byte-identical with and without them, and the calibrated marginals then test the device's departure from linear response without any possibility of contaminating the verification.
	
	\subsection{Shot contraction of the dependence}
	
	\begin{proposition}[Predicted shot contraction]
		\label{prop:shots}
		Let $\hat m_{i}$ be the $S$-shot estimator of $m_i=\langle Z_i\rangle$ at fixed germ, so $\mathrm{Var}(\hat m_i)=(1-m_i^2)/S$, independent across germs and cells. To leading order the device-estimated correlations obey
		\begin{equation}
			\hat\rho_{ij} \;\simeq\; c_i\,c_j\,\rho_{ij},
			\qquad
			c_i=\Bigl[\,1+\tfrac{\E_{\bm\varepsilon}[1-m_i^2]}{S\,\mathrm{Var}_{\bm\varepsilon}(m_i)}\Bigr]^{-1/2},
			\label{eq:shotlaw}
		\end{equation}
		with every quantity on the right computable from the trained model \emph{before} deployment. Moreover, because shot noise is independent across cells, it can only attenuate cross-cell dependence, never increase it.
	\end{proposition}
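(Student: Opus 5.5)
The plan is to treat the shot-noise estimator as the ideal expectation plus an additive noise term, and compute the attenuation from the law of total variance and total covariance. At a fixed germ $\bm\varepsilon$ write $\hat m_i = m_i(\bm\varepsilon) + \eta_i$. Here $\eta_i$ is the centred sampling error of the $S$-shot mean of $\pm1$ outcomes, so $\E[\eta_i\mid\bm\varepsilon]=0$ and $\mathrm{Var}(\eta_i\mid\bm\varepsilon)=(1-m_i^2)/S$. This is just the variance of a $\pm1$ variable with mean $m_i$, divided by $S$. Cross-cell noise is conditionally uncorrelated, $\E[\eta_i\eta_j\mid\bm\varepsilon]=0$ for $i\neq j$, by the independence hypothesis. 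The affine readout $Y_i=m_i+s_i\langle Z_i\rangle$ cancels out of any Pearson correlation, so it suffices to work with $\hat m_i$ directly.

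\emph{Step one (covariance).} By the law of total covariance, $\mathrm{Cov}(\hat m_i,\hat m_j)=\mathrm{Cov}_{\bm\varepsilon}(m_i,m_j)+\E_{\bm\varepsilon}[\mathrm{Cov}(\eta_i,\eta_j\mid\bm\varepsilon)]$. The second term vanishes, since the $\eta$'s are conditionally centred and uncorrelated and any cross terms $m_i\eta_j$ have zero conditional mean. The covariance is therefore exactly unchanged.

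\emph{Step two (variance).} By the law of total variance, $\mathrm{Var}(\hat m_i)=\mathrm{Var}_{\bm\varepsilon}(m_i)+\E_{\bm\varepsilon}[1-m_i^2]/S$. Dividing the unchanged covariance by the square roots of the two inflated variances gives exactly $\hat\rho_{ij}=c_ic_j\rho_{ij}$, with $c_i$ as in Eq.~\eqref{eq:shotlaw}. This identity is exact at the population level for Pearson correlation. The ``leading order'' qualifier enters only because the paper's dependence statistics are Spearman ranks and finite-sample estimates. For these I would invoke the standard near-Gaussian relation $\rho_S\approx(6/\pi)\arcsin(\rho/2)$, which is approximately linear in $\rho$ for moderate values, so the same multiplicative contraction holds to first order in $1/S$. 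Every ingredient, namely $m_i(\bm\varepsilon)$, its variance over germs, and $\E[1-m_i^2]$, is a functional of the trained noiseless circuit. It can therefore be evaluated by exact simulation (Sec.~\ref{sec:simulability}) before submission, which gives the ``computable before deployment'' claim.

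\emph{Step three (attenuation only).} Since $c_i\in(0,1]$, the Pearson statement gives $|\hat\rho_{ij}|\le|\rho_{ij}|$ directly. For the general claim that independent noise cannot create dependence, I would condition on $\bm\varepsilon$. Given the germ, the $\hat m_i$ are conditionally independent, so all dependence in $\hat{\bm m}$ is inherited through the common argument $\bm\varepsilon$. In particular, if the $m_i(\bm\varepsilon)$ are independent, so are the $\hat m_i$.

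The main obstacle is making ``can only attenuate'' precise for rank statistics rather than Pearson correlation. Adding independent noise is a garbling, and for a general non-Gaussian copula garbling does not monotonically shrink Spearman's $\rho$ in every case. I would therefore state that part of the result rigorously for covariance and Pearson correlation, which the calculation above gives exactly. For Spearman I would state it to leading order in $1/S$, via the Gaussian-copula approximation, and flag that as the asymptotic content of the word ``leading order'' in the proposition.
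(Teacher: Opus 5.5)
Your proof is correct and is the paper's regression-dilution argument made explicit. The law of total covariance shows that the cross-cell covariance is untouched by conditionally independent, centred shot noise, the law of total variance inflates each variance by $\E_{\bm\varepsilon}[1-m_i^2]/S$, and so the Pearson correlation contracts exactly by $c_ic_j$. Your caveat that the ``never increase'' clause is rigorous for covariance and Pearson correlation but only heuristic for Spearman ranks is well founded, and the paper's one-line proof likewise establishes only the Pearson case.
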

	
	\begin{proof}
		This is regression dilution. Adding independent noise of variance $\E[1-m_i^2]/S$ to a signal of variance $\mathrm{Var}(m_i)$ multiplies each correlation by $c_ic_j$ with $c_i$ as stated. Independence across cells forbids spurious covariance.
	\end{proof}
	
	The statement is of direct experimental use. The entire shot-noise cost of the hardware protocol is predicted from the trained model alone, so after a device run the genuine hardware noise is the \emph{residual beyond} the prediction, not the whole gap. For the \texttt{ibm\_fez} run we report in Sec.~\ref{sec:hardware}, at the device's effective shot count the predicted mean pairwise attenuation is $\numHwAtten$. Deconvolving it from the measured hardware dependence gives an implied noiseless mean $|\rho_S|=\numHwRhoDeconv$ against $\numRho$ in exact simulation, so shot noise accounts for a $1.4\%$ contraction and the residual $0.02$ deficit is genuine device error, cleanly separated. The same equation sizes the shot budget before submission, and it is how the $\numHwShots$-shot setting of the deployment was chosen.
	
	% ============================================================================
	\section{Results in Noiseless Simulation}
	\label{sec:results_sim}
	% ============================================================================
	
	\subsection{Dataset and trained configuration}
	
	We use the $8$-cell CLIC calorimeter shower dataset~\cite{clic_ds}, $N_{\text{total}}=4000$ Geant4 samples with a $65/35$ train/test split. The deployed model has $L=6$ (chaos order $6$), the eight-qubit path topology with one shared germ wire, and $154$ gate angles, trained with the energy-distance loss on a frozen batch of $1500$ germs against $1400$ training showers by L-BFGS with adjoint gradients, fresh-germ validation every $25$ iterations, and early stopping on the validation curve. The saved checkpoint is the best-validation iterate. Every quoted number is evaluated on $\numNgen$ fresh germs against the $1400$ held-out showers, in z-scored coordinates, with subsampling confidence intervals.
	
	\subsection{Verification}
	
	With the couplers removed and the shared wire frozen, the generated cells are exactly independent (Proposition~\ref{prop:floor}). Measured, the frozen strip test sits at $\max|\rho|=\numStripT$ against a sampling ceiling of $\numStripFloor$, with the wire free, the shared-latent channel alone carries mean $|\rho_S|=\numStripFreeMean$ (Sec.~\ref{sec:strip}). The trained model reaches mean $|\rho_S|=\numRho$ $[\numRhoCiLo,\numRhoCiHi]$ and rank-dependence effect size $D=\numD$ $[\numDCiLo,\numDCiHi]$ against the independence floor $F=\numFfloor\pm\numFfloorSd$. The dependence beyond the shared-latent channel is therefore produced by the entangling gates, and the attribution covers the \emph{entire} generative output, marginals included, since no classical map intervenes.
	
	\subsection{Distributional metrics}
	\label{sec:metrics}
	
	Per-cell Wasserstein-$1$ distances average $\overline{W_1}=\numWoneCell$ in intensity units. The total-energy distance is $W_1(E)=\numWoneE$, which is $\numWoneEpct\%$ of the spectrum mean and $\numWoneEwidth\%$ of the spectrum width (Fig.~\ref{fig:energy}). The energy-sum width ratio $\mathrm{sd}[\sum_kY_k]/(\sum_k\mathrm{Var}[Y_k])^{1/2}$, which equals $1$ exactly for independent cells and is blind to the marginals by construction, is $\numWidthModel$ against $\numWidthData$ in data, so the model slightly overshoots the near-cancellation that energy conservation imposes on the data's mixed-sign correlations. The Spearman residual is $\lVert\Delta\rho_S\rVert_F=\numResidF$ with mean $|\rho_S|=\numRho$ against $\numRhoData$ in data (Fig.~\ref{fig:correlations}). The joint energy-distance two-sample permutation test resolves the residual, $\mathcal E=\numEmodel$ at $p=\numPermP$ ($700$ vs $700$, $500$ permutations): at these sample sizes the model is statistically distinguishable from Geant4, a detection statement that is logically separate from the effect-size statement (Sec.~\ref{sec:sigma_vs_D}), under which the model captures $D=\numD$ of the available dependence.
	
	\subsection{The dense-graph reference}
	\label{sec:densegraph}
	
	The same ansatz trained on the dense $C_8(1,2)$ graph at $L=4$, the expressivity-maximizing configuration studied in Sec.~\ref{sec:topology}, reaches $D=0.994$ $[0.941,1.010]$ and mean $|\rho_S|=0.562$ on the identical held-out protocol, with Frobenius residual $0.347$. That model is the expressivity ceiling of the family at $n=8$ and is quoted here as such. It embeds in no present device, and its routed realization costs $316$ two-qubit pulses in $98$ layers with a projected certification budget above $10^8$ shots (Sec.~\ref{sec:topology}). The deployed path model gives up $0.10$ of $D$ and buys a factor $30$ in shot budget and a circuit that exists. Whether intermediate graphs on square-lattice devices recover the difference is a measurement we have not made, and the shipped sweep tool performs it.
	
	\subsection{Dependence structure and tail behaviour}
	\label{sec:weak}
	
	The residual dependence error is structured. The magnitude deficit is four percent in the mean absolute rank correlation ($\numRho$ against $\numRhoData$), and the Frobenius residual of $\numResidF$ concentrates in the longest-range pairs, which is the expected signature of a path graph whose causal cone is $\pm L$ sites supplemented by a rank-one global mode.
	
	Tail behaviour is the regime that separates model families, and we give it its own statistic. At quantile $q$ define the empirical tail-dependence coefficients $\lambda_L(q)$ and $\lambda_U(q)$ averaged over pairs, and the asymmetry $\Lambda(q)=\lambda_L(q)-\lambda_U(q)$. On the held-out data at $q=\numTailQ$, $\lambda_L=\numLamLdata$ $[\numLamLdataCiLo,\numLamLdataCiHi]$ against $\lambda_U=\numLamUdata$ $[\numLamUdataCiLo,\numLamUdataCiHi]$: the intervals are disjoint and $\Lambda=\numLambdaData$. The model gives $\lambda_L=\numLamLmodel$, $\lambda_U=\numLamUmodel$, $\Lambda=\numLambdaModel$, reproducing the sign and $62\%$ of the size of the asymmetry at the quoted quantile, and Table~\ref{tab:tails} resolves the three most strongly coupled pairs. Section~\ref{sec:tails} shows why this statistic, not the correlation matrix, is where model families separate on this dataset.
	
	The honest asymptotic statement is a no-go, and it applies to this model class including the deployed instance.
	
	\begin{proposition}[Smooth expectation readout has no intermediate asymptotic tail dependence]
		\label{prop:tailnogo}
		Let $Y_k=m_k+s_k\,g_k(\bm\varepsilon)$ with each $g_k$ a trigonometric polynomial (hence real-analytic) on $[0,1]^d$ and $\bm\varepsilon$ absolutely continuous with bounded density. If $g_k$ attains its minimum at non-degenerate critical points, then for any pair $(i,j)$ the asymptotic lower tail-dependence coefficient satisfies $\lambda_L\in\{0,1\}$: it is $1$ if $g_i$ and $g_j$ share an argmin and $0$ otherwise, and likewise for $\lambda_U$ at the maxima.
	\end{proposition}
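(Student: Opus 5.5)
The plan is to reduce $\lambda_L$ to a volume computation for small sublevel sets of the $g_k$ near their minimizers, using that the readout is affine. Since $s_k>0$, the rank of $Y_k$ equals the rank of $g_k(\bm\varepsilon)$, so $\lambda_L=\lim_{q\to0}P\bigl(F_i(g_i)\le q,\ F_j(g_j)\le q\bigr)/q$, where $F_k$ is the law of $g_k(\bm\varepsilon)$. Write $m_k=\min g_k$. By compactness of $[0,1]^d$ and analyticity, the argmin set of $g_k$ is closed. Under the non-degeneracy hypothesis its points are isolated, hence finite, say $\{x^{(k)}_1,\dots,x^{(k)}_{r_k}\}$. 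I assume these are interior points; boundary minimizers would be handled with half-ellipsoids and only change constants.

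The first step is the local Morse picture. By the Morse lemma at each $x^{(k)}_a$ we have $g_k(x)=m_k+\tfrac12(x-x^{(k)}_a)^{\top}H^{(k)}_a(x-x^{(k)}_a)+O(|x-x^{(k)}_a|^3)$ with $H^{(k)}_a\succ0$. The sublevel set $\{g_k\le m_k+t\}$ is then, for small $t$, a union of near-ellipsoids $E^{(k)}_a(t)$ of radius $\Theta(\sqrt t)$ and volume $\kappa_d\,(2t)^{d/2}\det(H^{(k)}_a)^{-1/2}(1+o(1))$. Multiplying by the density $p$ of $\bm\varepsilon$, assumed continuous and positive at these points, gives $F_k(m_k+t)\sim c_k\,t^{d/2}$ with $c_k=\kappa_d 2^{d/2}\sum_a p(x^{(k)}_a)\det(H^{(k)}_a)^{-1/2}$. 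Inverting, the lower $q$-quantile event $A_k(q)=\{F_k(g_k)\le q\}$ equals $\bigcup_a E^{(k)}_a(t_k(q))$ with $t_k(q)\sim(q/c_k)^{2/d}$, each component having diameter $O(q^{1/d})$.

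The second step is the disjoint case. If $g_i$ and $g_j$ share no argmin, the two finite argmin sets are at positive distance $\delta$. For $q$ small the components of $A_i(q)$ and $A_j(q)$ have diameter $o(\delta)$, so $A_i(q)\cap A_j(q)=\emptyset$ eventually. Hence the joint probability is exactly zero for small $q$, and $\lambda_L=0$. The upper tail is identical with $g_k$ replaced by $-g_k$ and minima by maxima.

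The third step, the shared case, is where I expect the main obstacle. If $x^\ast$ is a common argmin, then $A_i(q)$ and $A_j(q)$ each contain an ellipsoid around $x^\ast$ of volume comparable to $q$, and I must show their intersection carries asymptotically all of the mass $q$. My plan is to read \emph{share an argmin} as the two functions having the same minimum structure up to the monotone reparametrization that rank statistics ignore. In that reading, the argmin sets coincide and the local quadratic germs $H^{(i)}_a, H^{(j)}_a$ agree up to a common positive scalar. Then $g_j=\psi(g_i)+o(g_i-m_i)$ near each minimizer for an increasing $\psi$, the two quantile ellipsoids coincide to leading order, and $P(A_i\cap A_j)/q\to1$.

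I would verify this through the volume ratio $\mathrm{vol}(E^{(i)}\cap E^{(j)})/\mathrm{vol}(E^{(i)})$, computed in coordinates that diagonalize one Hessian. This is the delicate point. If the germs are not proportional, the ratio is a fixed geometric constant strictly below one, so the reduction to proportional germs is exactly where the stated dichotomy is earned. The written proof must make this identification explicit, rather than leave it to the bare phrase ``share an argmin''.
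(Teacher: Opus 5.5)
Your route is the paper's own: Morse sublevel sets of radius $\cO(\sqrt t)$ about the argmins, with disjoint argmins forcing the joint tail to vanish and coincident argmins forcing it to $q$. Your disjoint case is cleaner than the paper's, since the joint probability is exactly zero for small $q$ rather than merely $o(q)$.

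The difficulty you isolate in the shared case is real, but it is a defect of the statement and of the paper's sketch, not of your argument. The sketch asserts that when the argmins coincide ``the balls coincide to leading order''. Morse coordinates differ between $g_i$ and $g_j$, however, and in the original coordinates the two quantile sets are equal-mass ellipsoids of different shape. The literal claim is therefore false, as two examples on $[0,1]^2$ with uniform germs show.
\begin{itemize}
\item Take $g_i=\cos 2\pi x_1+\cos 2\pi x_2$ and $g_j=\cos 2\pi x_1+4\cos 2\pi x_2$. They share the unique non-degenerate argmin $(\tfrac12,\tfrac12)$, with Hessians $4\pi^2\,\mathrm{diag}(1,1)$ and $4\pi^2\,\mathrm{diag}(1,4)$. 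The quantile sets are asymptotically a disk and an equal-area ellipse of axis ratio $2$, and their overlap gives $\lambda_L=\tfrac4\pi\arctan(1/\sqrt2)\approx0.78$.
\item Partial overlap of argmin sets fails as well. Take $g_i=\cos 4\pi x_1+\cos 2\pi x_2$ and $g_j=-4\sin 2\pi x_1+\cos 2\pi x_2$. They share the argmin $(\tfrac14,\tfrac12)$ with identical Hessians there. But $g_i$ has a second minimizer $(\tfrac34,\tfrac12)$ carrying half its tail mass, so $\lambda_L=\tfrac12$.
\end{itemize}

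So do not present proportional germs as a reading of ``share an argmin''. It is an extra hypothesis, and the correct trichotomy is as follows.
\begin{itemize}
\item $\lambda_L=0$ when the argmin sets are disjoint.
\item $\lambda_L=1$ when the argmin sets coincide and $H^{(j)}_a=c\,H^{(i)}_a$ with one constant $c>0$ for every $a$.
\item Otherwise $\lambda_L$ is the overlap fraction you describe, which lies in $(0,1)$ whenever the argmin sets intersect. It is strictly positive because both ellipsoids contain a common ball.
\end{itemize}

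With that hypothesis your step closes. Near each minimizer, $g_j-m_j=c\,(g_i-m_i)\bigl(1+\cO(\sqrt{g_i-m_i})\bigr)$. This sandwiches $A_j(q)$ between two sublevel sets of $g_i$ whose masses differ by $\cO(q\sqrt{t})=o(q)$. Since $A_i(q)$ belongs to the same nested family, $P(A_i\triangle A_j)=o(q)$.

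Keep your positivity assumption on the density near the minimizers. It is not cosmetic: with a merely bounded density that vanishes near an argmin, the quantile sets are governed by the edge of the support rather than by the Morse points. The paper's closing appeal to the absence of atoms does not address the actual source of intermediate values, which is this geometry. The proportionality condition is automatic only for $d=1$ with a single minimizer.
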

	
	\begin{proof}[Proof sketch]
		Near a non-degenerate minimum the sublevel set $\{g_k<\min g_k+\delta\}$ is a Morse ball of radius $\cO(\sqrt\delta)$ about the argmin. If the argmins of $g_i$ and $g_j$ differ, the generic case, the two balls are disjoint for small $\delta$, so the joint tail probability vanishes faster than $q$ and $\lambda_L=0$. If they coincide, the balls coincide to leading order and $\lambda_L=1$. Absolute continuity of $\bm\varepsilon$ rules out atoms that could produce intermediate values.
	\end{proof}
	
	The corollary is architectural. No expectation-value readout of a smooth ansatz driven by absolutely continuous latents, at any depth, on any coupler graph, with or without the shared wire, can produce asymptotic tail dependence strictly between $0$ and $1$. The finite-$q$ asymmetry the model does reproduce is real and is the quantity a calorimeter measurement can access ($q\gtrsim1/N$), but claims about the $q\to0$ limit are foreclosed for this model class, and we make none. The no-go is moreover stronger than it first appears, and its exact boundary is worth recording, because it dictates the architecture of any successor.
	
	\begin{corollary}[Tail dichotomy for branch mixtures]
		\label{cor:branchmix}
		Let the output law be a finite mixture $\sum_b w_b\,\mathcal{L}_b$ in which each branch $\mathcal{L}_b$ is generated as in Proposition~\ref{prop:tailnogo} from absolutely continuous latents through smooth maps $g^{(b)}_k$ with non-degenerate minima. Then every pairwise $\lambda_L\in\{0,1\}$ still holds whenever the branch attaining the lowest joint values has distinct argmins for the pair, so a discrete common latent that merely selects among smooth branches cannot produce intermediate asymptotic tail dependence. If instead the extreme-dominating branch is \emph{comonotone at its minimum} for the pair, all cells driven monotonically by one scalar latent near the joint minimum, then $\lambda_L$ equals the probability that the tail event falls in a comonotone branch, and every value in $(0,1)$ is attainable by weighting comonotone against smooth branches.
	\end{corollary}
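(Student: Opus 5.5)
The plan is to reduce the mixture statement to Proposition~\ref{prop:tailnogo} applied branch by branch, plus a bookkeeping of which branch controls the extreme lower quantiles. Write $F_k$ for the marginal cdf of $Y_k$ under the mixture. Then $F_k=\sum_b w_b F^{(b)}_k$. The lower $q$-quantile $u_k(q)$ of the mixture is, for $q\to0$, determined by the branches whose minimum $\min g^{(b)}_k$ is smallest. I will first show that only these branches carry mass below $u_k(q)$ for small $q$. Branches with strictly larger minima contribute zero once $u_k(q)$ drops below their minimum. Call this the extreme-dominating branch (or set of branches if minima tie). Then
\[
\lambda_L=\lim_{q\to0}\frac{P(Y_i\le u_i(q),\,Y_j\le u_j(q))}{q}=\lim_{q\to0}\sum_b \frac{w_b\,P_b(Y_i\le u_i,Y_j\le u_j)}{q}.
\]
In the smooth-distinct-argmin case, the Morse-ball argument in the proof of Proposition~\ref{prop:tailnogo} applies inside each dominating branch. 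The joint sublevel set is the intersection of two disjoint $\cO(\sqrt\delta)$ balls, so each summand is $o(q)$ and $\lambda_L=0$. A cross-branch joint event is impossible, because a sample comes from one branch. This gives the first claim and the remark that a discrete selector cannot help. If the dominating branch instead has a shared argmin, its contribution tends to its full tail share, giving $\lambda_L=1$ when it is the only dominating branch.

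For the comonotone case, I consider several branches with equal minima and equal local tail behaviour, so that each contributes a fixed fraction $\pi_b$ of the marginal tail mass $q$. The required matching of the sublevel-set volume exponents is an assumption I will impose explicitly. In a comonotone branch, $Y_i$ and $Y_j$ are increasing functions of one scalar latent near the minimum. Hence the events $\{Y_i\le u_i\}$ and $\{Y_j\le u_j\}$ are nested, and their joint probability is asymptotically the branch's whole share. The other branches contribute $o(q)$ as above. Summing gives $\lambda_L=\sum_{b\ \mathrm{comonotone}}\pi_b$, the probability that the tail event falls in a comonotone branch. Varying the weights $w_b$ continuously then sweeps $(0,1)$.

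The main obstacle is the quantile-matching step: showing that the tail shares $\pi_b$ exist and are strictly between $0$ and $1$. This requires the branch marginal cdfs near the common minimum to be regularly varying with the \emph{same} index. Otherwise one branch dominates and the dichotomy returns. A comonotone branch, driven by a scalar latent near a non-degenerate minimum, has cdf of order $\delta^{1/2}$. A $d$-dimensional Morse branch has order $\delta^{d/2}$. So I first try to match indices by construction, for example by giving the smooth branch an effective one-dimensional minimum along the relevant directions. If that fails, I fall back to stating attainability under the assumption of matched tail indices.
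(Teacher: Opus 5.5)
Your argument follows the paper's proof. The joint tail probability splits over branches because each sample lives in one branch. A branch whose minimum for cell $k$ lies strictly above the mixture minimum drops out once $u_k(q)$ falls below it. The Morse-ball argument kills every smooth branch with distinct argmins, and a comonotone branch contributes through nested events.

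The step you flag as the obstacle is a real one, and the paper's sketch skips it when it says the ratio ``converges to the comonotone weight fraction.'' The limit is the comonotone \emph{tail share}, as the statement itself says. That share depends on the weights only when the competing branches attain the same minimum value and their branch cdfs vanish at the same rate ($\delta^{1/2}$ versus $\delta^{d/2}$ in your example). Otherwise the share is $0$ or $1$ for every choice of weights. The paper's own numerical check is consistent with this: its comonotone branch undercuts the main one, so the within-branch pairs sit at $\lambda_L=1$ independently of the weight $0.15$. The intermediate value $6/28$ is a pair average, not a pairwise coefficient.

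You do not need a fallback assumption, because attainability is existential and a matched construction is immediate. Do not flatten directions of a $d$-dimensional minimum: that makes it degenerate and leaves the hypotheses. Match the latent dimension instead. Drive both branches by one uniform germ $\varepsilon$, with $g_i=g_j=\cos 2\pi\varepsilon$ in the comonotone branch and $g_i=-\sin 2\pi\varepsilon$, $g_j=\sin 2\pi\varepsilon$ in the smooth branch. All of these attain the minimum $-1$ at interior non-degenerate points ($\varepsilon=\tfrac12,\tfrac14,\tfrac34$) with the same curvature $4\pi^2$. Hence every branch cdf is $\sqrt{2\delta}/\pi$ to leading order, the smooth branch's two tail events are disjoint, and the mixture gives $\lambda_L=w_c$ exactly for every $w_c\in(0,1)$.

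Two smaller corrections:
\begin{itemize}
\item Nested events give the \emph{minimum} of the comonotone branch's tail shares for cells $i$ and $j$. This equals ``the branch's whole share'' only when the two shares agree, as they do in the construction above.
\item Your aside that a shared argmin gives $\lambda_L=1$ (which mirrors Proposition~\ref{prop:tailnogo}) holds only when the Hessians of $g_i$ and $g_j$ at that point are proportional. Otherwise the two equal-mass sublevel ellipsoids overlap in a fixed fraction strictly between $0$ and $1$. The corollary needs only the distinct-argmin case, so drop the aside.
\end{itemize}
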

	
	\begin{proof}[Proof sketch]
		For $q$ below the quantile at which the lowest-lying branch dominates each margin's tail, the joint tail probability decomposes over branches. In a smooth branch the Morse-ball argument of Proposition~\ref{prop:tailnogo} applies verbatim and contributes $o(q)$ unless argmins coincide. In a comonotone branch the pair's tail events are identical up to monotone reparameterization, so the branch contributes its full conditional tail mass, which is proportional to $q$. The ratio converges to the comonotone weight fraction.
	\end{proof}
	
	The corollary is verified numerically on the deployed checkpoint at $2\times10^5$ draws, where the deployed model's $\bar\lambda_L(q)$ falls $0.174\to0.027$ as $q$ decreases from $0.05$ to $0.002$, a Bernoulli branch selecting between two smooth parameter settings falls identically ($0.138\to0.011$), and an engineered branch of weight $0.15$, a conditional reset followed by a single shared-germ re-upload on cells $0$--$3$, comonotone and undercutting the main branch, produces a plateau at $0.222$--$0.235$ across two decades of $q$ against the parameter-free prediction $6/28=0.214$, with the within-branch pairs at $\lambda_L=1.000$ and all others at $0.013$. Operationally the escape is mid-circuit measurement with feed-forward, conditional reset plus one conditional germ rotation, native to present hardware. Building a trained generative model on it is the designated follow-up and is outside the scope of this paper.
	
	\begin{table}[t]
		\small
		\centering
		\caption{Empirical tail-dependence coefficients at $q=\numTailQ$ for the three most strongly coupled cell pairs, deployed model against Geant4 on the held-out split.}
		\label{tab:tails}
		\renewcommand{\arraystretch}{1.15}
		\begin{tabular}{ccccc}
			\toprule
			pair $(i,j)$ & $\lambda_L$ model & $\lambda_L$ data & $\lambda_U$ model & $\lambda_U$ data \\
			\midrule
			$(1,2)$ & $\numLamLabM$ & $\numLamLabD$ & $\numLamUabM$ & $\numLamUabD$ \\
			$(2,3)$ & $\numLamLbcM$ & $\numLamLbcD$ & $\numLamUbcM$ & $\numLamUbcD$ \\
			$(6,7)$ & $\numLamLghM$ & $\numLamLghD$ & $\numLamUghM$ & $\numLamUghD$ \\
			\bottomrule
		\end{tabular}
	\end{table}
	
	% ============================================================================
	\section{Results on \texttt{ibm\_fez}}
	\label{sec:hardware}
	% ============================================================================
	
	\subsection{Protocol}
	\label{sec:hardware_protocol}
	
	What runs on the device is the \textsc{Generate} procedure of Algorithm~\ref{alg:qpce}, unchanged, on \texttt{ibm\_fez}, a $156$-qubit Heron r2 processor~\cite{IBMQuantumPlatform_2024}, via the Qiskit Runtime Sampler~\cite{QiskitRuntime_2024}. The deployed circuit is built \emph{native}: the eight-qubit path is placed by exhaustive search over all $136$ zero-SWAP embeddings, scored by the summed calibrated two-qubit and readout errors of each candidate region (Sec.~\ref{sec:topology}), and the couplers execute as \emph{fractional} $\RZZ(\phi)$ pulses, one hardware pulse per logical coupler with no CNOT decomposition. The compiled instance is asserted, not hoped, to match its budget: $\numHwPulses$ two-qubit pulses in $\numHwLayers$ layers, zero SWAPs, verified by a dry-run ISA inspection before any shots are spent, and the full gate schedule, including the device's $\RZZ$ angle convention and bit ordering, is checked against the trained model by independent numerical emulation before submission, so a sign or ordering error cannot reach the device silently. Pauli gate twirling is unavailable in combination with fractional gates on this stack, a trade we accept and then test (Sec.~\ref{sec:hardware_why}). Dynamical decoupling is enabled. One production job binds $B=\numHwGerms$ germ vectors into the template at $S=\numHwShots$ shots each, one shower per germ. The germ batch is archived at submission and the collection step refuses to score results against any other germs. A validation gate on the returned counts, marginal sanity against the binomial band and the physical range $|\langle Z\rangle|\le1$ ($0$ violations in $4000$ cell estimates), must pass before any statistic is computed.
	
	\subsection{Readout calibration}
	\label{sec:hardware_probes}
	
	Two model-independent corrections are applied, and both are provably unable to alter any dependence statistic. First, readout confusion. A dedicated calibration job of $2n+2=18$ circuits, all-zeros, all-ones, and the $2n$ single-flip states, measures the per-qubit assignment matrices, and the returned counts are unfolded by the exact tensored inverse. The calibration circuits never touch the model. Second, a per-qubit gain. The surviving noise of a shallow native circuit acts on each cell, to leading order, as a linear contraction $\langle Z_k\rangle\mapsto c_k\langle Z_k\rangle$ toward the midpoint $m_k$. The gains are estimated by regressing the device cell estimates on the noiseless reference over the archived germ batch, with the binomial shot variance subtracted so that shot noise is not misread as gain loss, and are absorbed into the stored scales $s_k$. The measured per-qubit gains span $\numHwGainCiLo$ to $\numHwGainCiHi$ with mean $\numHwGain$. Because this second correction uses the simulable reference it does \emph{not} export to unsimulable sizes, and we say so. What makes it harmless here is Proposition~\ref{prop:contraction}: both corrections are strictly monotone per-cell maps, so the Spearman matrix, the effect size $D$, the strip statistics and every tail coefficient are byte-identical before and after them. Raw and corrected marginals are reported side by side, and the headline dependence numbers are calibration-free by construction, not by promise.
	
	\subsection{Results}
	\label{sec:hardware_results}
	
	$\numHwGerms$ showers, $\numHwShots$ shots each, about $1.9$ minutes of quantum time gives, against the held-out Geant4 split in the protocol described in Sec.~\ref{sec:results_sim}: rank-dependence effect size $D=\numHwD$ $[\numHwDCiLo,\numHwDCiHi]$ against the independence floor $F=\numFfloor$, mean $|\rho_S|=\numHwRho$ $[\numHwRhoCiLo,\numHwRhoCiHi]$ against $\numRho$ in noiseless simulation and $\numRhoData$ in data, and Spearman residual $\lVert\Delta\rho_S\rVert_F=\numHwResidF$ (Fig.~\ref{fig:correlations}). The device reproduces the model's tail asymmetry at the same quantile, $\lambda_L=\numHwLamL$, $\lambda_U=\numHwLamU$, $\Lambda=\numHwLambda$ against $\numLambdaModel$ in simulation and $\numLambdaData$ in data (Fig.~\ref{fig:tails}), and the energy-sum width ratio is $\numWidthHw$ against $\numWidthData$ in data. Every number in this paragraph is a rank or ratio statistic and is therefore untouched by either calibration.
	
	\subsection{Why the device reproduces the simulated dependence}
	\label{sec:hardware_why}
	
	The agreement decomposes into three channels, all three quantified. \emph{Shot noise}, the only channel that must contract correlations, is predicted by Proposition~\ref{prop:shots} before submission. The measured effective shot count per cell is $S_{\mathrm{eff}}=\numHwSeff$ against $\numHwSeffPred$ predicted from the assumed error budget, meaning the device outperformed the budget, with implied per-$\RZZ$ error $\numHwRzzErr$, and the predicted mean pairwise attenuation is $\numHwAtten$. Deconvolving it gives an implied noiseless mean $|\rho_S|=\numHwRhoDeconv$ against $\numRho$ simulated, so the genuine device cost beyond shot noise is $0.02$ in mean $|\rho_S|$, about four percent. \emph{Monotone response}, gain and readout bias, contracts marginals and provably leaves every rank statistic fixed (Proposition~\ref{prop:contraction}), which is why the correlation figures agree far more closely than the raw marginals do. \emph{Coherent error} is the channel the first two cannot absorb, and it is visible in a reproduction test comparing the device cell estimates to the noiseless reference germ-by-germ gives $\chi^2/\mathrm{dof}=\numHwChi$ against a pass threshold of $1.5$, with per-cell residual scatter of $0.036$ to $0.044$ against a shot-only expectation of $0.022$ to $0.028$. The excess is consistent with coherent over-rotation of the untwirled fractional $\RZZ$ pulses, the price of the fractional-gate/twirling exclusivity noted above. The reproduction test resolves what the aggregate statistics leave open, and places the device close to, but measurably outside, a purely stochastic error model, and the headline dependence numbers survive because rank statistics are insensitive to a smooth germ-independent bias field in a way that per-germ residuals are not, a diagnostic separation the protocol is designed to deliver. None of these channels increases the measured dependence, since shot noise is independent across cells (Proposition~\ref{prop:shots}) and monotone response is rank-invariant (Proposition~\ref{prop:contraction}), so device error can attenuate the correlations toward the predicted level but cannot inflate them, and the measured $D=\numHwD$ is a floor on what the device genuinely transports.
	
	\begin{figure*}[htb]
		\centering
		\begin{subfigure}[t]{0.245\textwidth}\includegraphics[width=\textwidth]{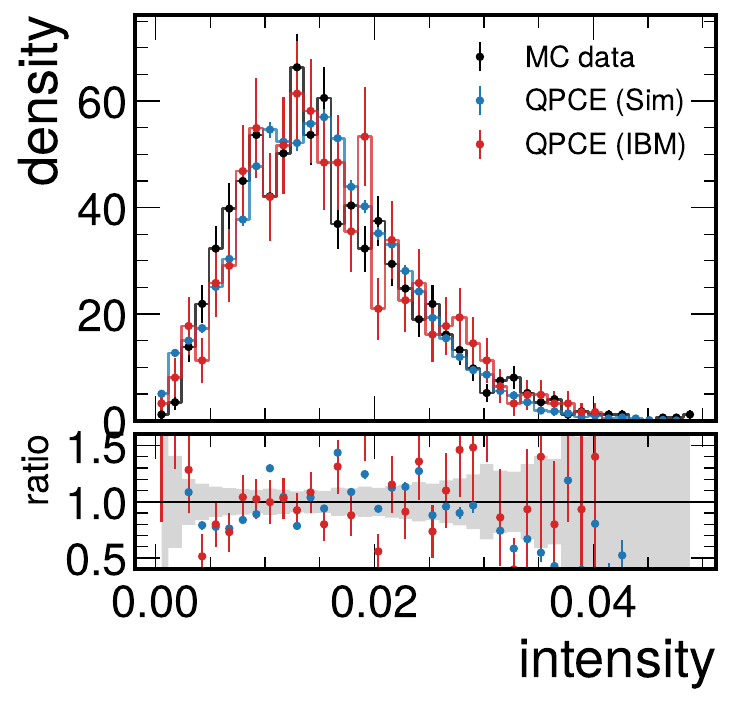}\caption*{Pixel 0}\end{subfigure}\hfill
		\begin{subfigure}[t]{0.245\textwidth}\includegraphics[width=\textwidth]{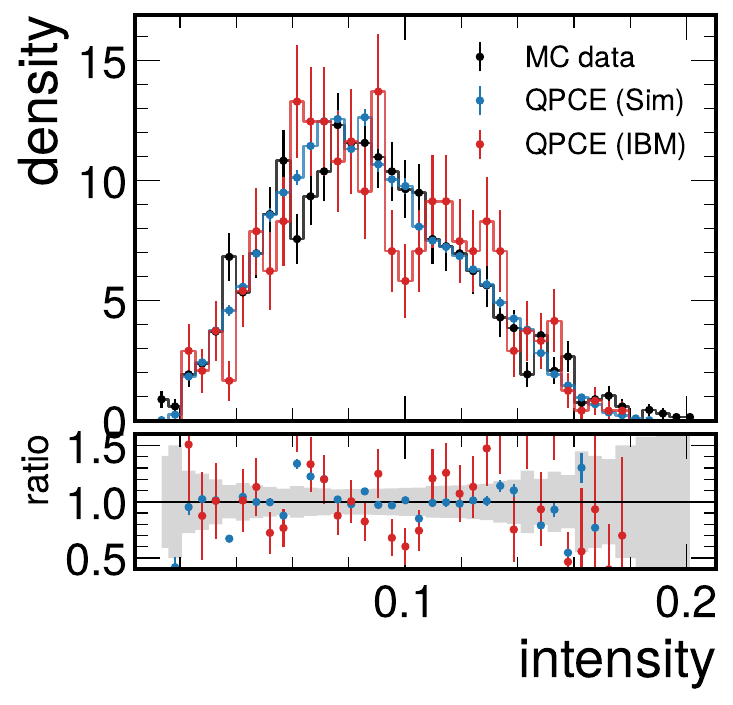}\caption*{Pixel 1}\end{subfigure}\hfill
		\begin{subfigure}[t]{0.245\textwidth}\includegraphics[width=\textwidth]{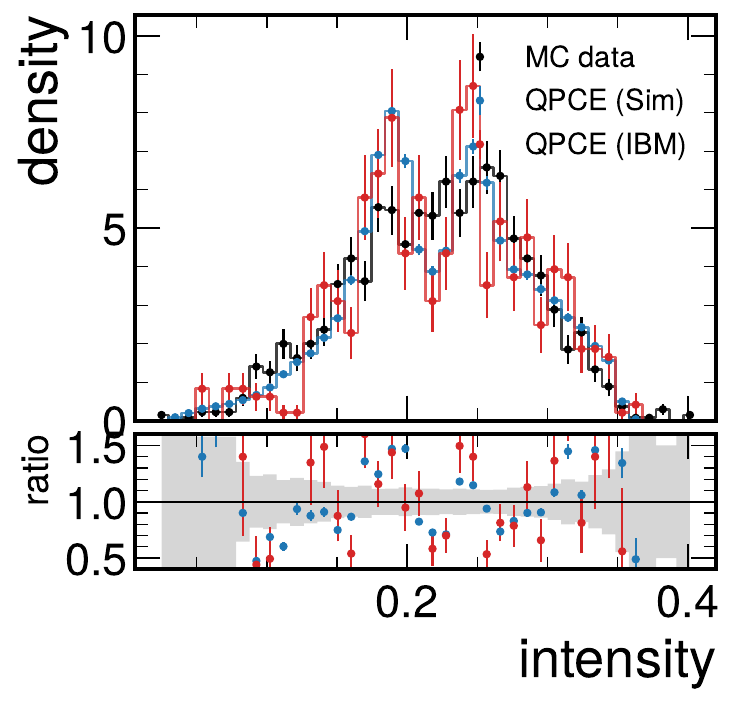}\caption*{Pixel 2}\end{subfigure}\hfill
		\begin{subfigure}[t]{0.245\textwidth}\includegraphics[width=\textwidth]{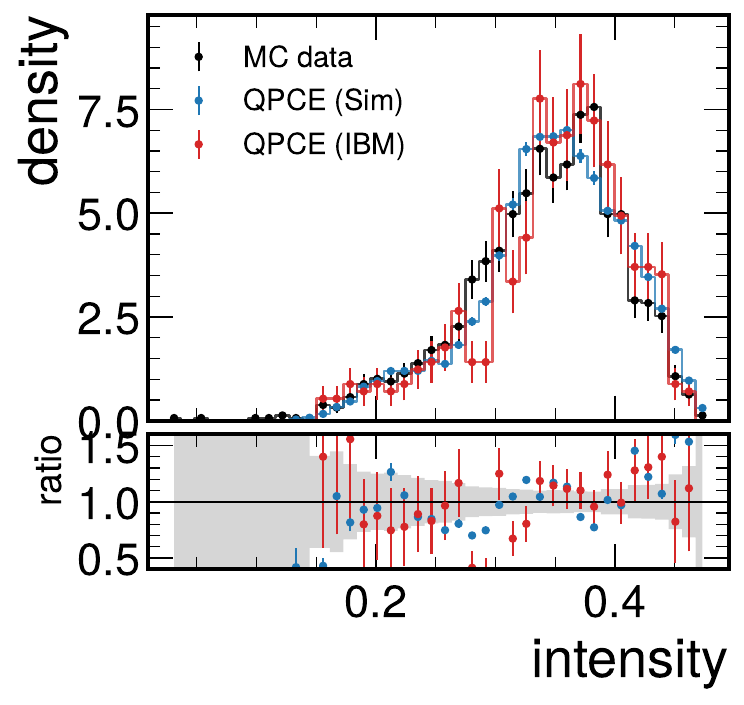}\caption*{Pixel 3}\end{subfigure}
		
		\vspace{2pt}
		
		\begin{subfigure}[t]{0.245\textwidth}\includegraphics[width=\textwidth]{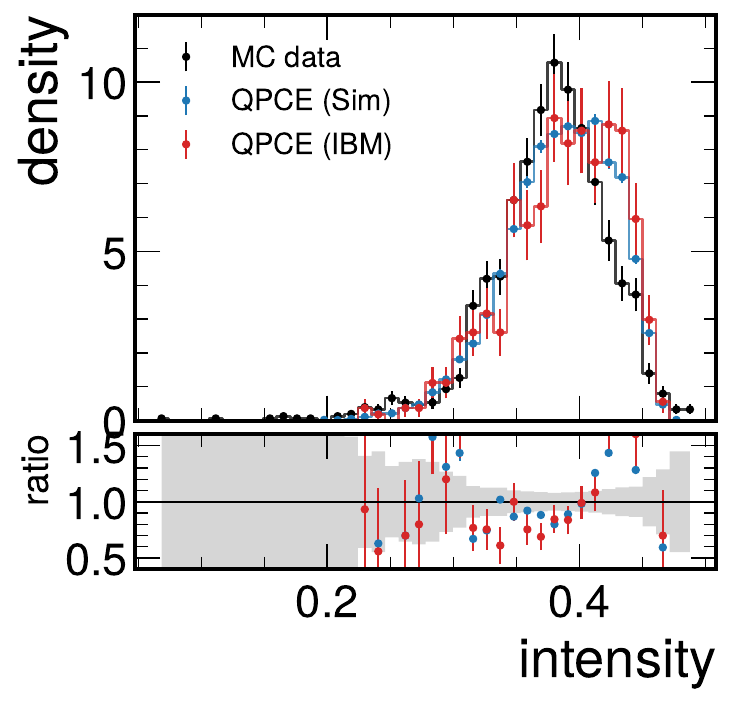}\caption*{Pixel 4}\end{subfigure}\hfill
		\begin{subfigure}[t]{0.245\textwidth}\includegraphics[width=\textwidth]{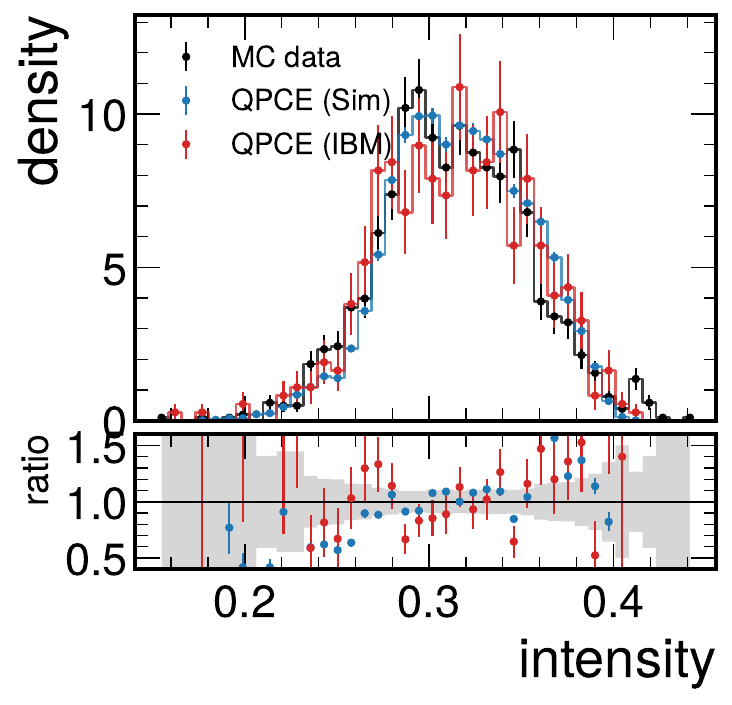}\caption*{Pixel 5}\end{subfigure}\hfill
		\begin{subfigure}[t]{0.245\textwidth}\includegraphics[width=\textwidth]{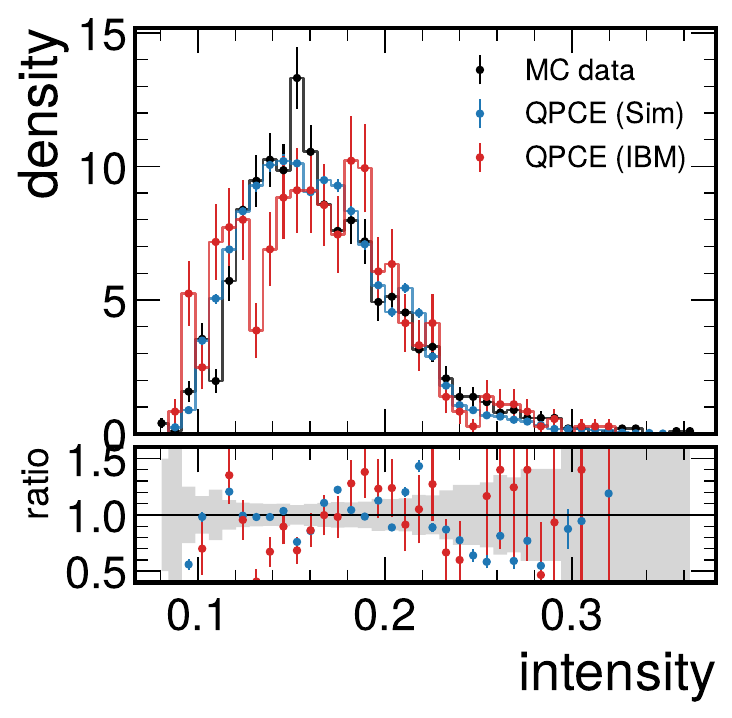}\caption*{Pixel 6}\end{subfigure}\hfill
		\begin{subfigure}[t]{0.245\textwidth}\includegraphics[width=\textwidth]{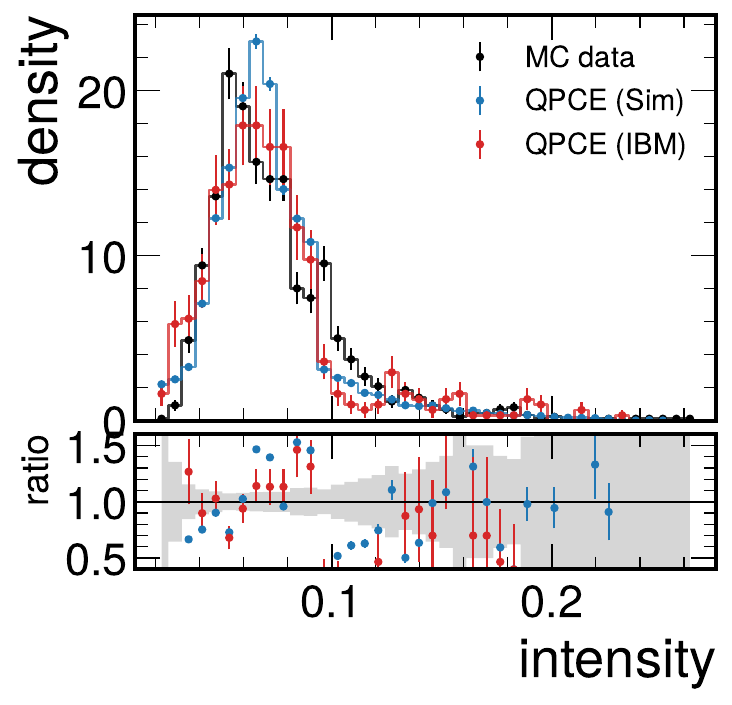}\caption*{Pixel 7}\end{subfigure}
		\caption{Marginal intensity distributions for the eight calorimeter cells. Geant4 reference (black points, $\pm1\sigma$ multinomial errors), noiseless simulation of the deployed circuit (step), and \texttt{ibm\_fez} (points), $\numHwGerms$ showers at $\numHwShots$ shots. These marginals are produced by the circuit through the linear readout, so the panels are direct device tests. The readout-confusion unfolding and the gain absorbed by the calibration in Sec.~\ref{sec:hardware_probes} affect these panels and provably nothing in Figs.~\ref{fig:correlations} and~\ref{fig:tails}. Ratio-panel and error conventions as in Appendix~\ref{app:validation}.}
		\label{fig:marginals}
	\end{figure*}
	
	\begin{figure}[tb]
		\centering
		\includegraphics[width=0.9\columnwidth]{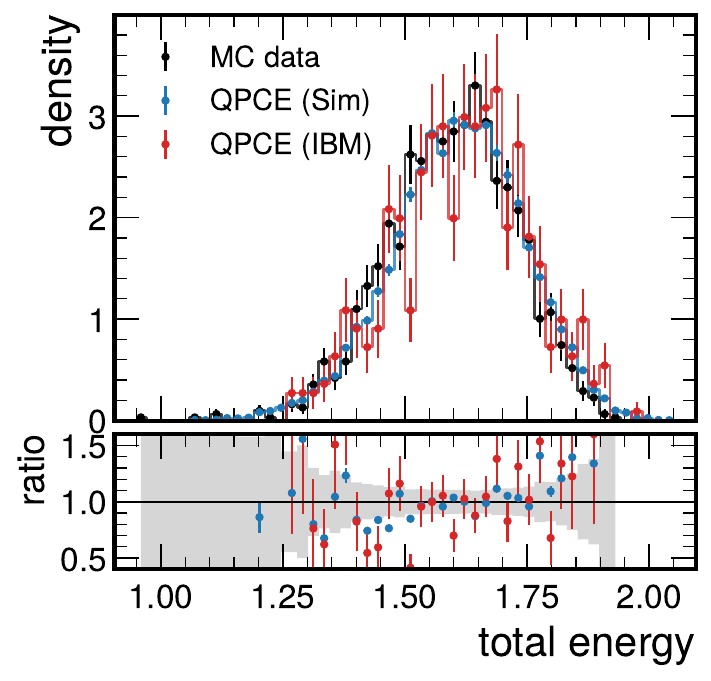}
		\caption{Total deposited energy $E=\sum_j Y_j$, conventions as in Fig.~\ref{fig:marginals}. Because nothing pins the marginals, the summed spectrum is a genuine joint test at first order, in which cell errors compound in the mean and dependence errors in the width. In simulation $W_1(E)$ is $\numWoneEpct\%$ of the spectrum mean and $\numWoneEwidth\%$ of the spectrum width. The width ratios are $\numWidthData$ (data), $\numWidthModel$ (simulation), $\numWidthHw$ (\texttt{ibm\_fez}).}
		\label{fig:energy}
	\end{figure}
	
	\begin{figure*}[tb]
		\centering
		\includegraphics[width=0.32\textwidth]{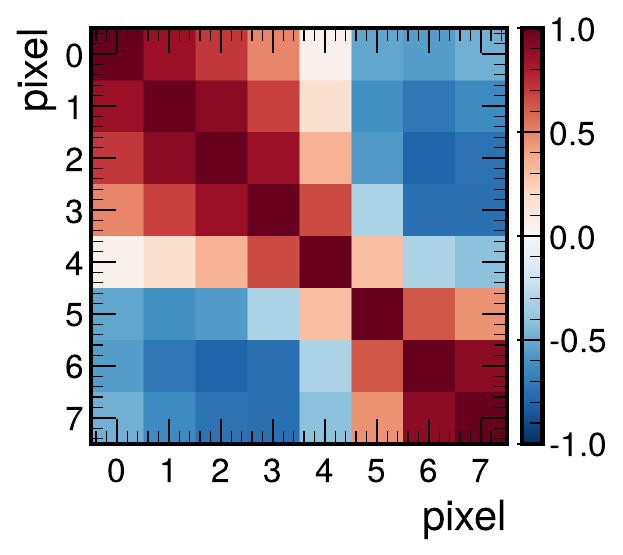}
		\includegraphics[width=0.32\textwidth]{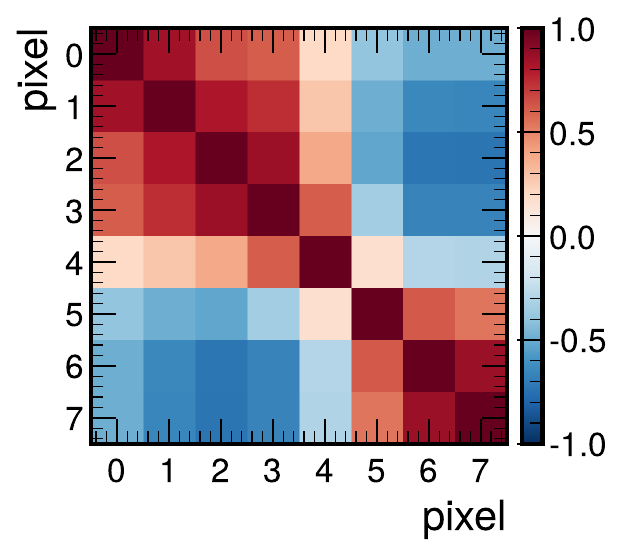}
		\includegraphics[width=0.32\textwidth]{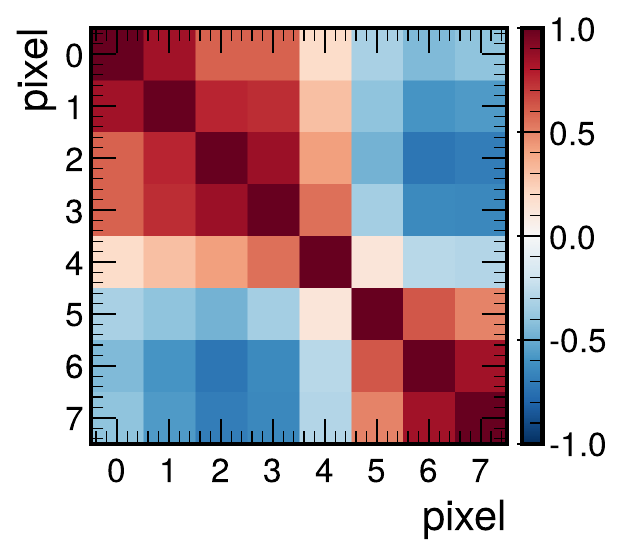}
		\caption{Spearman rank-correlation structure. Geant4 reference, deployed model in noiseless simulation, and \texttt{ibm\_fez}, on one shared symmetric scale. By Proposition~\ref{prop:contraction} this figure is invariant under any per-cell device response and any calibration. It is the calibration-free hardware observable, and the only figure whose hardware deficit is attributable to non-monotone device error. Mean $|\rho_S|$: $\numRhoData$ (data), $\numRho$ (simulation), $\numHwRho$ (device). The shot-deconvolved device value is $\numHwRhoDeconv$.}
		\label{fig:correlations}
	\end{figure*}
	
	\begin{figure*}[htb]
		\centering
		\begin{subfigure}[t]{0.245\textwidth}\includegraphics[width=\textwidth]{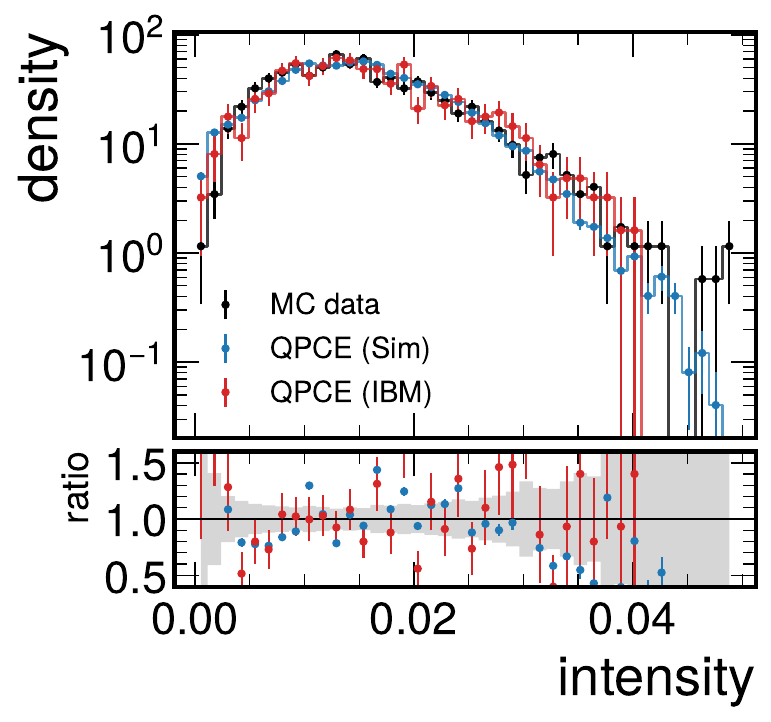}\caption*{Pixel 0}\end{subfigure}\hfill
		\begin{subfigure}[t]{0.245\textwidth}\includegraphics[width=\textwidth]{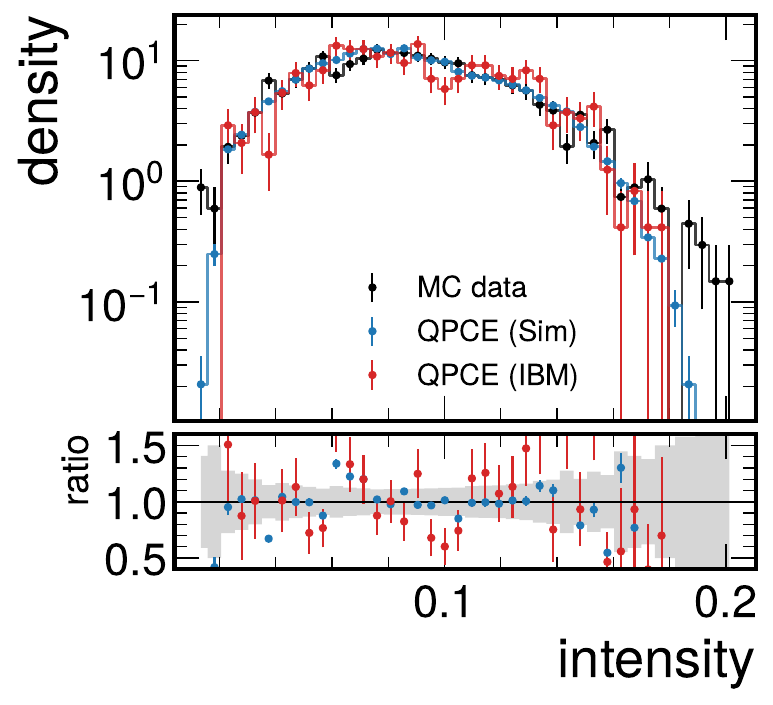}\caption*{Pixel 1}\end{subfigure}\hfill
		\begin{subfigure}[t]{0.245\textwidth}\includegraphics[width=\textwidth]{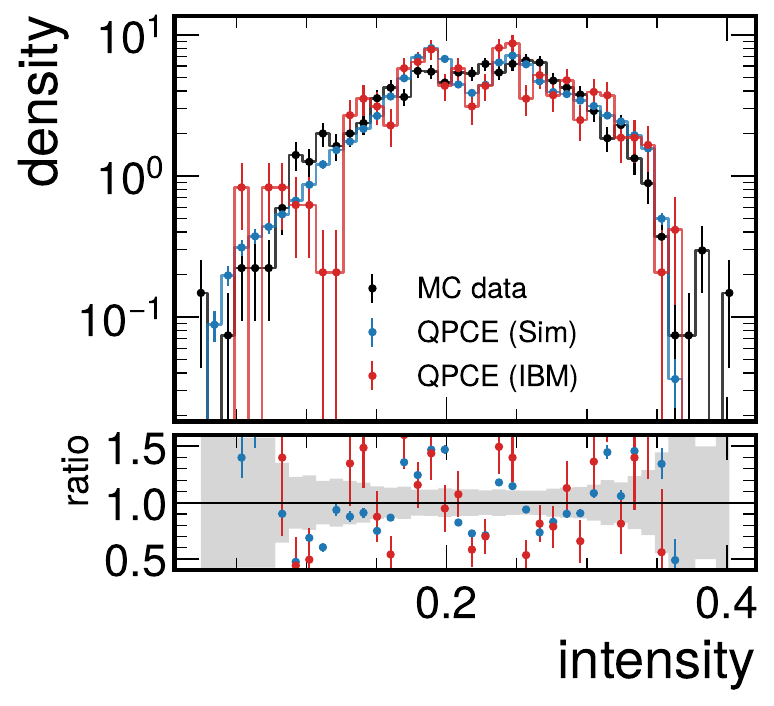}\caption*{Pixel 2}\end{subfigure}\hfill
		\begin{subfigure}[t]{0.245\textwidth}\includegraphics[width=\textwidth]{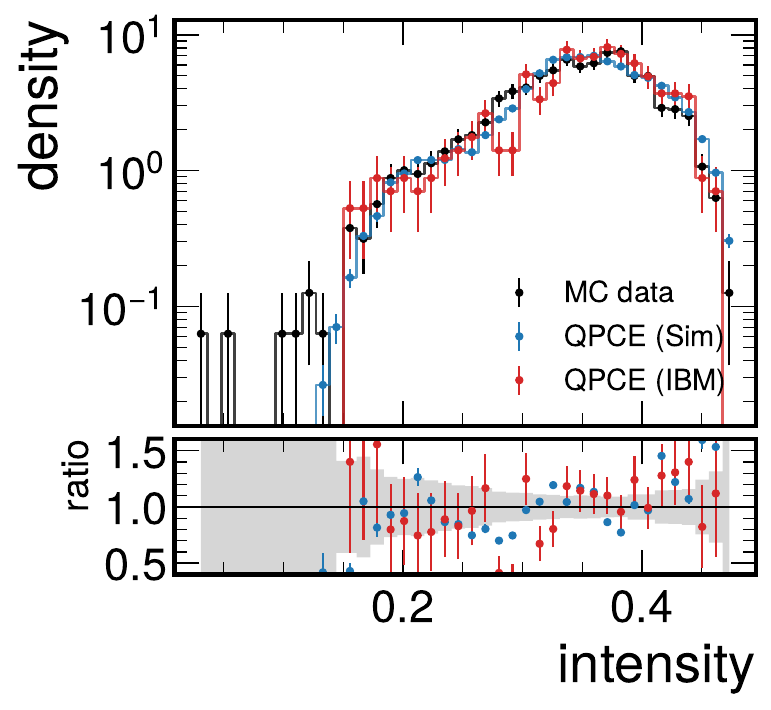}\caption*{Pixel 3}\end{subfigure}
		
		\vspace{2pt}
		
		\begin{subfigure}[t]{0.245\textwidth}\includegraphics[width=\textwidth]{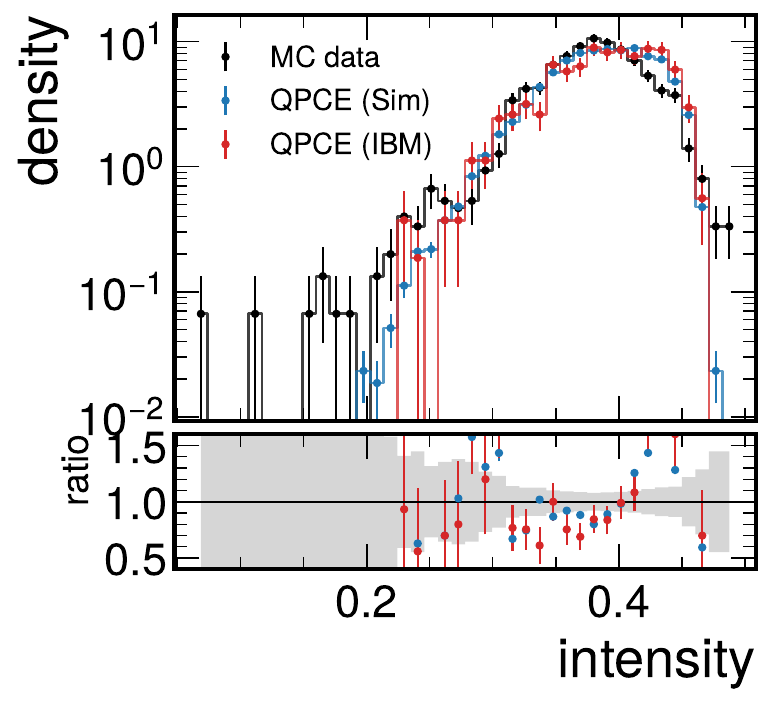}\caption*{Pixel 4}\end{subfigure}\hfill
		\begin{subfigure}[t]{0.245\textwidth}\includegraphics[width=\textwidth]{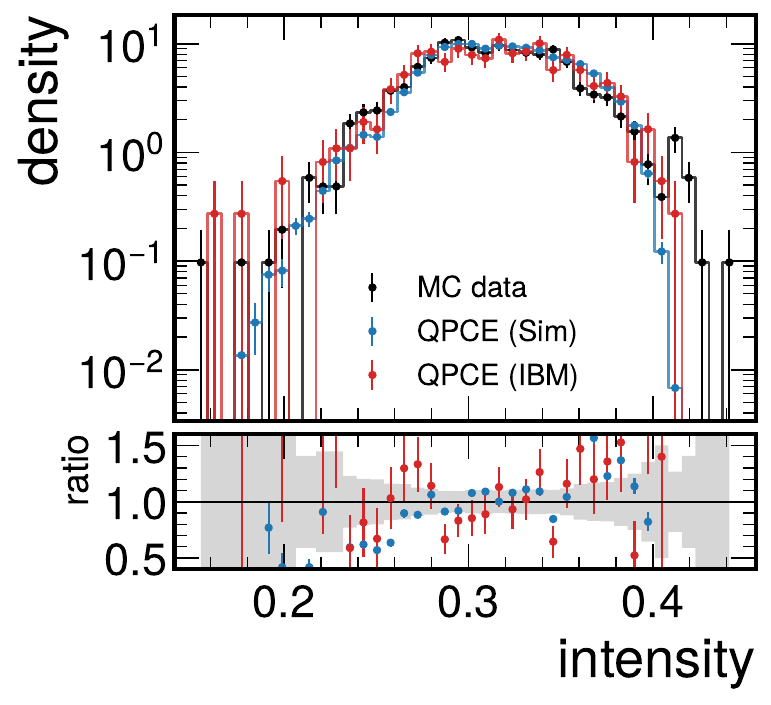}\caption*{Pixel 5}\end{subfigure}\hfill
		\begin{subfigure}[t]{0.245\textwidth}\includegraphics[width=\textwidth]{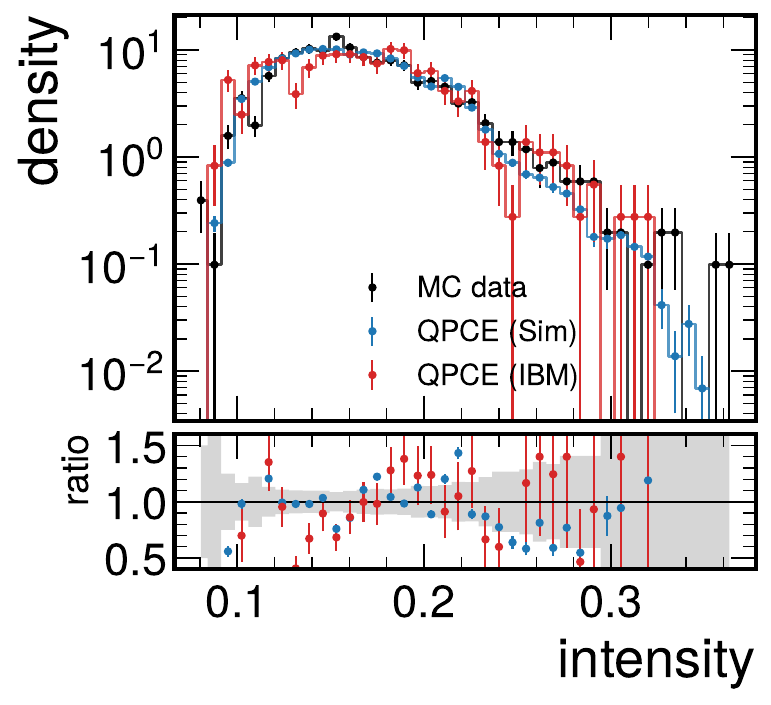}\caption*{Pixel 6}\end{subfigure}\hfill
		\begin{subfigure}[t]{0.245\textwidth}\includegraphics[width=\textwidth]{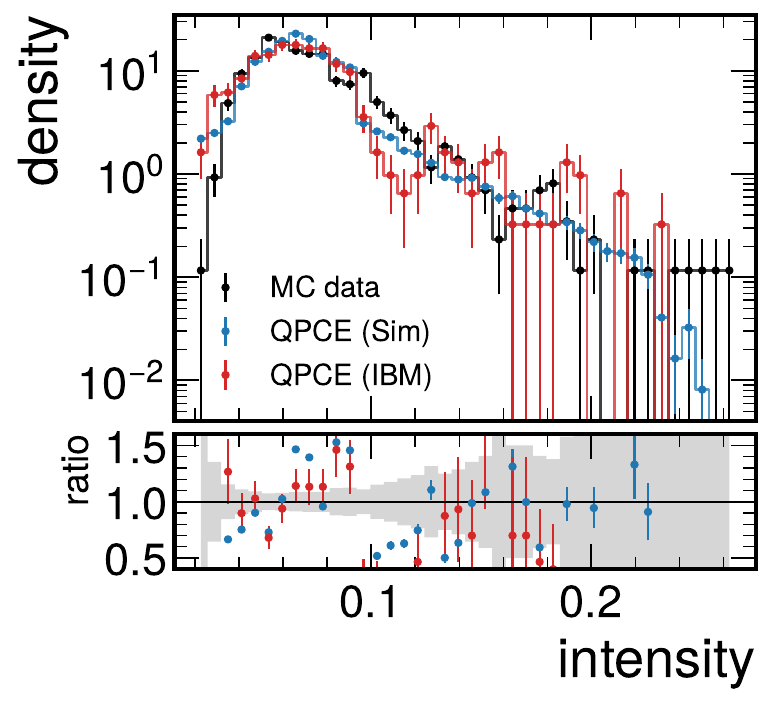}\caption*{Pixel 7}\end{subfigure}
		
		\caption{Tail behaviour, the regime where model families separate (Sec.~\ref{sec:tails}). Top rows show the per-cell marginal tails on a logarithmic scale, Geant4 against the deployed model in simulation and on \texttt{ibm\_fez}. The bottom panel shows the upper tail of the total-energy spectrum. The tail-dependence asymmetry $\Lambda=\lambda_L-\lambda_U$ at $q=\numTailQ$ is $\numLambdaData$ in data, $\numLambdaModel$ in simulation, and $\numHwLambda$ on the device. Every elliptical reference fixes it at zero identically.}
		\label{fig:tails}
	\end{figure*}
	
	% ============================================================================
	\section{Classical Consistency References and the Structure of the Target}
	\label{sec:honest}
	% ============================================================================
	
	\subsection{The Gaussian reference as a correctness check}
	
	Classical reference densities enter this paper with one purpose, to test whether the statistics QPCE produces are \emph{correct in scale}, not to stage a competition. The natural reference for dependence is the maximum-likelihood multivariate normal fitted on rank-transformed data, carrying the $n(n-1)/2=28$ entries of a correlation matrix. Because the target is near-elliptical at second order (Sec.~\ref{sec:elliptical}), this reference is close to \emph{correctly specified} there, and a correctly specified model marks the level a well-calibrated fit of this dataset must reach on second-order-dominated metrics~\cite{Embrechts_2002}. Table~\ref{tab:baselines} therefore fixes the scale of every metric with three anchors, the provable independence floor below, the correctly specified second-order reference above, and the data itself, and places the deployed circuit and its \texttt{ibm\_fez} execution between them. Read as the consistency check it is, the table confirms that the circuit's dependence statistics sit in the correct regime, $D=\numD$ between floor and reference, mean $|\rho_S|$ within seven percent of the reference's $\numGrho$, and that the one statistic on which the circuit and the reference differ in kind, the tail asymmetry $\Lambda$, differs in exactly the direction the structure of the two model families dictates (Sec.~\ref{sec:tails}).
	
	\begin{table*}[tb]
		\small
		\centering
		\caption{Consistency references on the same held-out data and metrics, tail coefficients at $q=\numTailQ$. The independence row is the provable verification floor. The Gaussian reference on ranks is close to correctly specified for the second-order structure of this near-elliptical target and anchors the attainable scale of the aggregate metrics, and $\Lambda=\lambda_L-\lambda_U$ is the tail-asymmetry statistic that every elliptical family fixes at zero identically.}
		\label{tab:baselines}
		\renewcommand{\arraystretch}{1.4}
		
		\begin{tabular}{lcccccc}
			\toprule
			Model & par. & $\lVert\Delta\bm\rho_S\rVert_F$ & $D$ & $\lambda_L$ err. & $\lambda_U$ err. & $\Lambda$ \\
			\midrule
			Gaussian reference & $28$ & $\numGresidF$ & $\numGD$ & $\numGlamLerr$ & $\numGlamUerr$ & $\numGLambda$ \\
			QPCE (simulation)  & $154$ & $\numResidF$ & $\numD$ & $\numLamLerr$ & $\numLamUerr$ & $\numLambdaModel$ \\
			QPCE (\texttt{ibm\_fez}) & $154$ & $\numHwFrob$ & $\numHwD$ & --- & --- & $\numHwLambda$ \\
			Independence       & $0$  & $\numIresidF$ & $\numID$ & $\numIlamLerr$ & $\numIlamUerr$ & $0$ \\
			\bottomrule
		\end{tabular}
	\end{table*}
	
	\subsection{Second-order structure and tail behaviour of the target}
	\label{sec:elliptical}
	
	At second order the dataset is close to elliptical. For any elliptical copula Kendall's $\tau$ fixes the dispersion parameter exactly through $\tau=(2/\pi)\arcsin\rho$~\cite{Lindskog_2003}, and for the Gaussian member the rank correlation then follows as $\rho_S=(6/\pi)\arcsin(\rho/2)$. Estimating $\rho$ from $\tau$ pair by pair and comparing the prediction with the measured $\rho_S$, we find agreement across all $28$ pairs to within $\numEllipMaxDev$, so a single correlation matrix captures the second-order structure almost completely, which is what qualifies the Gaussian reference as the correctly specified anchor of Table~\ref{tab:baselines}, and which also explains why strong second-order fits by hybrid models on data of this kind are evidence about their classical networks rather than their circuits (the attribution problem we described in Sec.~\ref{sec:intro}, in its most concrete form).
	
	The tails tell the opposite story, and it is a theorem, not a fit. Every elliptical copula is radially symmetric, so its lower and upper tail-dependence functions \emph{coincide}: $\Lambda(q)=\lambda_L(q)-\lambda_U(q)\equiv0$ identically, for the Gaussian, for every Student-$t$, for every elliptical model at any correlation matrix and any tail index~\cite{Embrechts_2002,Joe_2014}. The measured data give $\Lambda=\numLambdaData$ at $q=\numTailQ$ with disjoint bootstrap intervals on $\lambda_L$ and $\lambda_U$ (Sec.~\ref{sec:weak}). The dataset is therefore \emph{decisively non-elliptical} in exactly one measured place, and any elliptical reference, however many parameters it is given, is misspecified there by construction. This is the discriminating statistic of this dataset and a finding of this work, since any elliptical description of these showers, at any parameter count, is misspecified in the tails by construction, while the circuit family is not.
	
	\subsection{Tail asymmetry of the generated showers}
	\label{sec:tails}
	
	Against that theorem the measured behaviour reads as follows. The Gaussian reference, tail-symmetric by construction, splits the data's asymmetry down the middle, overshooting the upper tail and undershooting the lower by the same amount, $\Lambda_G=\numGLambda$. The circuit, trained on a single joint metric that was never told about tails, produces $\Lambda=\numLambdaModel$ against $\numLambdaData$ in data, the correct sign and $62\%$ of the size, and the \texttt{ibm\_fez} execution transports it at $\Lambda=\numHwLambda$. Per side, the circuit reproduces the upper tail to $\numLamUerr$ and the lower, heavier tail to $\numLamLerr$, the lower tail being the harder side for every model family considered. Archimedean alternatives do not resolve the tension, because a Clayton copula matched to the data's lower tail reaches $\Lambda\approx0.2$ but collapses the rank correlation to $0.24$ against the required $\numRhoData$, so representing this dataset's asymmetry \emph{and} its correlation level simultaneously is an open modelling problem for the standard parametric families, and the finite-$q$ asymmetry is the quantity to target, because Proposition~\ref{prop:tailnogo} forecloses the asymptotic version for this model class and the calorimeter sample sizes cap the accessible quantile at $q\gtrsim1/N$ in any case. Targets with stronger tail asymmetry are the regime where the phase-parameterized interferometer has the most to exploit, and the verification protocol travels with the model there unchanged.
	
	\section{Scaling and the Advantage Question}
	\label{sec:discussion}
	% ============================================================================
	
	\subsection{Scaling of the deployed model}
	
	Every deployed resource grows linearly in the number of cells. A segmentation of $n$ cells maps to an $n$-qubit path, which embeds with zero SWAPs on the heavy-hex lattice for any $n$ up to the device size, and the two-qubit depth floor $L\chi'=2L$ does not depend on $n$ at all. The angle count is $(2n+|E|)L+2n=\cO(nL)$. One measurement setting suffices at every size, and the germs enter as rotation angles, so no state-preparation oracle appears whose cost could grow.
	
	The dependence architecture scales by the factor-model logic that motivated it. With $K$ shared germ wires the model supplies a rank-$K$ global dependence for $(K{+}1)n$ single-qubit rotation coefficients and no two-qubit gates at all, while the path couplers carry the short-range remainder. On this dataset the measured eigenstructure tells us that $K$ is small and fixed by the physics rather than by $n$. The leading mode alone carries $64\%$ of the dependence and reconstructs the off-diagonal correlations at $0.970$, and rank two reaches $0.994$. The global channel should therefore scale with the number of collective shower modes rather than with the cell count. The statistical protocol transfers unchanged, since the independence floor, the entangler-stripping measurements and every rank statistic are sample statistics available at any $n$, and the shot law of Proposition~\ref{prop:shots} is stated per pair and is size independent.
	
	Training has a structure worth stating plainly. At fixed depth the model is strictly local. On the path, $\langle Z_i\rangle$ depends only on the $2L{+}1$ private germs inside its causal cone and on the shared wire, which we verify directly on the deployed code, where changing a germ at graph distance $7>L$ moves $\langle Z_0\rangle$ by at most $8\times10^{-16}$. The exact reference, the adjoint training gradient and the gain calibration therefore cost $\cO(n\,2^{\min(n,\,2L+1)})$, which is linear in $n$ at fixed $L$. Classical training of the deployed family extends to hundreds of cells at $L\le6$ with no approximation anywhere. The regime that leaves classical reach is the one in which the causal cone covers a large register, either because depth grows with $n$ or because the coupler graph gains a dimension. There, training would move to the parameter-shift rule on hardware at $2\times\cO(nL)$ circuits per gradient, and the reference-based gain fit would give way to a reference-free calibration. The readout-confusion unfolding and every rank statistic already need no reference.
	
	\subsection{The advantage question as a requirements map}
	
	We claim no \emph{computational} quantum advantage. The construction earns its place by attribution, verifiability and deployability, together with a separation of a different kind established at the end of this section. The scaling analysis makes the structural position exact, and it should be stated once without hedging. At fixed depth on a one-dimensional native graph, every quantity the model touches, its outputs, its gradients and its references alike, is classically computable in time linear in $n$. Growing the cell count alone cannot make this family classically hard, and a claim to the contrary would be false.
	
	Three ingredients would move a QPCE-type generator toward a classically hard regime, and each can be named precisely. The first is a coupler graph of dimension two or higher with depth growing with system size, which is the setting of the sampling-hardness results for shallow circuits. The second is a sampling readout, single-shot Born outcomes in place of expectation values, since expectation values to fixed additive precision are both shot limited on hardware and often classically approximable. The third is a branch structure from mid-circuit measurement with feed-forward in which at least one branch is comonotone at the extremes. A discrete latent that merely mixes smooth branches is provably insufficient, and the comonotone form is precisely what Corollary~\ref{cor:branchmix} identifies as the requirement for asymptotic tail dependence.
	
	Each ingredient costs something the present construction certifies, the exact reference, the exact floor, and the expectation-level attribution respectively. Within this family, computational hardness and complete verifiability pull against each other, and we regard mapping that trade as the more useful contribution than asserting either endpoint.
	
	One separation survives all three trades, because it is not computational. At the level of the conditional laws the generator can reach, the trained model already produces per-shot statistics that no classical generative model of any size can produce, $S=\numBellShot\pm\numBellShotSe$ against the universal classical bound of $2$ (Sec.~\ref{sec:bellcap}). This holds unconditionally and without leaving the classically simulable regime. Mapping the computational frontier, and certifying this reachability separation on hardware, are the two follow-ups the architecture was built to support.
	
	\section{Conclusion}
	\label{sec:conclusion}
	% ============================================================================
	
	QPCE generates calorimeter shower images, the densities and the correlations together, from $154$ gate angles, no fitted classical parameters, and a linear readout holding $2n$ stored unit conversions. The result that makes this possible is that re-uploading the germs turns circuit depth into the order of the chaos expansion, exactly degree $L$ in every germ wire and verified to machine precision. The circuit is the expansion, and nothing classical stands before it or after it.
	
	The deployed instance was designed for the machine it runs on. Seven couplers on a path reach the chromatic-index depth floor, and one shared germ wire supplies the rank-one global mode of the data through single-qubit rotations alone. Its verification splits the generated dependence into two measurements. With the couplers off and the shared wire frozen, the dependence falls to sampling noise exactly. With the couplers off and the wire released, the shared-latent channel is measured on its own. The difference between them is what the entangling gates produce. On the held-out split the model reaches $D=\numD$ against the provable independence floor and mean $|\rho_S|=\numRho$ against $\numRhoData$ in Geant4, and at these sample sizes a joint permutation test still resolves the small remaining residual at $p=\numPermP$, which is a statement about detection rather than about effect size.
	
	What we report on hardware is a deployment protocol with a measured error budget rather than a single successful run. The identical circuit on \texttt{ibm\_fez}, $\numHwPulses$ fractional $\RZZ$ pulses in $\numHwLayers$ layers with zero SWAPs, reaches $D=\numHwD$ and mean $|\rho_S|=\numHwRho$. The shot contraction was predicted before submission, the residual device cost is isolated at $0.02$ in mean $|\rho_S|$ after deconvolution, every applied correction is provably unable to alter a rank statistic, and a reproduction test resolves the one channel the corrections cannot absorb, coherent over-rotation of the untwirled fractional gates, at $\chi^2/\mathrm{dof}=\numHwChi$. The device also transports the model's tail asymmetry, $\Lambda=\numHwLambda$ against $\numLambdaModel$ in simulation.
	
	One finding will outlive this particular circuit, and it concerns the target rather than the model. The dataset is close to elliptical at second order, which is what qualifies the rank-Gaussian reference as a correctness anchor for the aggregate metrics, and it is decisively non-elliptical in its tails, $\Lambda=\numLambdaData$ with disjoint intervals, a quantity every elliptical family fixes at zero identically. The circuit reproduces the sign and most of the magnitude of that asymmetry at finite quantile. Proposition~\ref{prop:tailnogo} shows that no smooth expectation-value readout of absolutely continuous latents can reproduce it asymptotically, which turns the next step from a matter of tuning into a specific architectural requirement, a discrete common latent of the kind that mid-circuit measurement with feed-forward provides.
	
	At $n=8$ the model is exactly simulable, and we present it as such. The claim we do make is different in kind. A generative model whose samples come entirely from a circuit can carry a dependence structure rich enough for real detector data, can be deployed unchanged on present hardware, and can certify by measurement, not by argument, which of its own gates produced that dependence.
	
	\section*{Data and Code Availability}
	% ============================================================================
	The $8$-cell CLIC calorimeter shower dataset is openly available on Zenodo~\cite{clic_ds}. The complete implementation, including the trained checkpoint, the verification protocol and the scoring tool that computes the independence floor and every metric in one call, is available at~\cite{qpce_code}. 
	
\section{Author Contributions}
J.S. designed the model, implemented the simulation and training code, ran the numerical experiments and the hardware deployment, and wrote the manuscript. S.M. and F.R. contributed to the calorimeter dataset and its preparation. D.K., F.G. and K.B. supervised the project. All authors discussed the results and reviewed the manuscript.

LLM was used for editing and rephrasing of the manuscript text, for literature search and verification of citations, for restructuring and documenting the accompanying software. The results were independently verified by the authors against the shipped code.

	\begin{acknowledgments}
		This research was supported in part through the Maxwell computational resources operated at Deutsches Elektronen-Synchrotron DESY (Hamburg, Germany), a member of the Helmholtz Association HGF. The authors acknowledge support from the Helmholtz Association HGF (Germany), Hamburgische Investitions- und F\"orderbank (IFB) (Germany), the European Union's HORIZON MSCA Doctoral Networks programme project ENGAGE (101034267), and the Ministry of Science, Research and Culture of the State of Brandenburg within the Center for Quantum Technologies and Applications (CQTA) (Germany).
	\end{acknowledgments}
	
	\bibliography{references}
	
	\appendix
	
	\section{Validation of the Numerical Pipeline}
	\label{app:validation}
	% ============================================================================
	
	Every result rests on a batched statevector engine, on histogram error bars, and on figures drawn from computed numbers. Each layer is validated independently and each check is shipped with the code.
	
	\emph{Engine.} The batched statevector simulator is compared against an independent dense Kronecker implementation using $\cO(4^n)$ explicit operators. Agreement is at $3\times10^{-16}$ with an assertion threshold of $10^{-10}$, and failure aborts the pipeline.
	
	\emph{Error bars.} For fixed bin edges, row-resampling bootstrap of a histogram is identical in distribution to a $\mathrm{Multinomial}(N,\hat p)$ redraw, so
	$\mathrm{SE}(\hat f_k)=\sqrt{\hat p_k(1-\hat p_k)/N}/h_k$
	is the exact infinite-resample limit of the bootstrap, with no Monte-Carlo noise and no runtime cost. This identity is verified numerically. Correlation-matrix standard errors have no closed form and use the bootstrap directly, verified against the asymptotic $(1-\rho^2)/\sqrt N$.
	
	\emph{Ratio panels.} Following the LHC convention, the ratio points carry the reference error only, $\sigma_r=\mathrm{se}_{\text{ref}}/f_{\text{gen}}$, and the band about unity carries the model error, $1\pm\mathrm{se}_{\text{gen}}/f_{\text{gen}}$. Placing both on the same object would double-count. The delta-method ratio error is verified against the bootstrap.
	
	\emph{Goodness of fit.} The multinomial constraint makes bins negatively correlated, so the quoted $\chi^2$ is diagonal-only with $\mathrm{ndf}=n_{\text{used}}-1$, and its null calibration is verified empirically at $\langle\chi^2/\mathrm{ndf}\rangle=1.02$ over $400$ trials. Bins with $n<10$ are dropped from both the ratio panel and the $\chi^2$, with the count annotated on the figure. Binning is fixed \emph{a priori}, because $\chi^2/\mathrm{ndf}$ is binning-dependent and tuning bins until a cell passes is p-hacking.
	
	\emph{Figures.} A structural audit rebuilds each figure in memory, walks the matplotlib artist tree, and asserts that the numbers \emph{drawn} equal the numbers \emph{computed}, covering ratio point positions, error-bar half-lengths, band vertices, the $3$ to $1$ panel geometry, inward ticks on all four sides, true-vector PDF output, and absence of overlapping text, $34$ checks in total.
	
	% ============================================================================
	\section{Practical Notes on the Adjoint Training}
	\label{app:training}
	% ============================================================================
	
	Three implementation details determine whether the fit we describe in Sec.~\ref{sec:loss} converges.
	
	\paragraph*{Common random numbers.}
	
	The loss is evaluated on a frozen germ batch, making it a deterministic function of the angles. A redrawn batch would hand L-BFGS a stochastic objective and break its line search. The unbiased $U$-statistic of a frozen batch can drift below zero late in training by sample memorization, expected behaviour of an unbiased estimator near $P=Q$ rather than a defect, which is why every reported quantity is recomputed on fresh germs against the held-out split.
	
	\paragraph*{Cost of the adjoint.}
	
	For any loss that touches the state through $\{\langle Z_i\rangle\}$, one backward sweep with the diagonal costate returns all derivatives, $154$ in the deployed instance, in three state evolutions independent of the parameter count. Agreement with the parameter-shift rule is at $10^{-14}$. The shift rule is retained as the hardware-executable gradient and the adjoint as the training gradient. The measured iteration cost at $n=8$, $L=4$, batch $1500$ falls from $47\,$s with parameter shift to $1.7\,$s.
	
	\paragraph*{Convergence is a result, not a preference.}
	
	The dependence contribution to the loss is two orders of magnitude below the initial marginal mismatch, so correlations move only in the final decade of the descent. Any fixed iteration budget that terminates inside the marginal-fitting phase reads, falsely, as an expressivity failure of the ansatz. we quantify this in Sec.~\ref{sec:loss} with the capacity probe.
	
\end{document}